\documentclass[11pt,a4paper]{article}
\pdfoutput=1

\makeatletter
\providecommand{\@listi}{\leftmargin\leftmargini}
\makeatother

\usepackage{jheppub}

\usepackage{amsthm}
\usepackage{booktabs}
\usepackage{array}

\makeatletter\toccontinuoustrue\makeatother

\makeatletter
\def\@listI{\leftmargin\leftmargini \parsep 2\p@ \topsep 3\p@ \itemsep 2\p@}
\let\@listi\@listI
\makeatother

\makeatletter
\def\th@plain{\thm@headfont{\bfseries}\itshape\thm@preskip6\p@\thm@postskip6\p@}
\makeatother

\newtheorem{theorem}{Theorem}[section]
\newtheorem{lemma}[theorem]{Lemma}
\newtheorem{proposition}[theorem]{Proposition}
\newtheorem{corollary}[theorem]{Corollary}

\newcommand{\ZZ}{\mathbb{Z}}
\newcommand{\RR}{\mathbb{R}}
\newcommand{\rep}[1]{\mathbf{#1}}
\newcommand{\arep}[1]{\overline{\mathbf{#1}}}
\newcommand{\Tr}{\mathrm{Tr}}
\newcommand{\Wsm}{W_{\rm SM}}
\newcommand{\Nosc}{N_{\rm osc}}
\newcommand{\ten}{$\rep{10}$-like}
\newcommand{\fbar}{$\arep{5}$-like}

\title{A localisation bound for Standard-Model representations\\ in heterotic orbifolds}

\author{Navid Ardakanian}
\affiliation{Independent Researcher}
\emailAdd{n.ardakanian@gmail.com}

\abstract{In heterotic string compactifications on toroidal orbifolds, the massless states at a twisted fixed point are constrained by the evenness of the sixteen-dimensional gauge lattice. We show that two massless left-chiral states in the same Standard-Model representation, built on the same constructing space-group element and distinct modulo the Weyl group of $SU(3)_c\times SU(2)_L$, must satisfy $\sqrt{R}+\sqrt{R'}\ge\sqrt2$, where $R$ is the squared length of the part of a state's gauge momentum orthogonal to the Standard-Model Cartan directions. For states without left-moving oscillator excitations this reads $R\ge1/2$. When $R<1$, $n$ such states, pairwise distinct in the same sense, carry the weights of the fundamental representation of a local $SU(n)$ commuting with the Standard Model, and $n\le1/(1-R)$. Because the representation fixes the projected length --- at the canonical hypercharge normalisation, $6/5$ for the $\mathbf{10}$-like $Q$, $u^c$, $e^c$ and $4/5$ for the $\bar{\mathbf{5}}$-like $d^c$ and lepton doublet --- the bound becomes a condition on the sector: a repeated $\mathbf{10}$-like representation requires a massless budget of at least $17/20$, and a repeated $\bar{\mathbf{5}}$-like one at least $13/20$. In the complete massless spectra of $16{,}138$ models from two orbifold catalogues the multiplicity ceiling is reached at four values of $R$. We exhibit an explicit model on $T^6/(\ZZ_3\times\ZZ_6)$ whose budget clears the higher threshold and carries a repeated $\mathbf{10}$-like representation at the multiplicity the bound permits. For the Standard Model embedded through $SU(5)$ in one $E_8$ factor we show that, for that geometry and budget, exactly two inequivalent order-six shift classes admit such a doubling, and among the embeddings over them with the Standard Model intact we find none with three net generations of $Q$, $u^c$ and $e^c$ at once.}

\keywords{Superstring Vacua, Compactification and String Models, Discrete Symmetries}

\begin{document}

\maketitle

%%%%%%%%%%%%%%%%%%%%%%%%%%%%%%%%%%%%%%%%%%%%%%%%%%%%%%%%%%%%%%%%%%%%%%%%%%%%%%%
\section{Introduction}
\label{sec:intro}

Toroidal orbifolds of the $E_8\times E_8$ heterotic string \cite{Gross:1984dd,Gross:1985fr,Dixon:1985jw,Dixon:1986jc,Ibanez:1986tp} are among the most thoroughly explored routes from string theory to the supersymmetric Standard Model \cite{Buchmuller:2005jr,Buchmuller:2006ik,Lebedev:2006kn,Nilles:2008gq}. Much of their phenomenology follows from where the matter sits. Untwisted states propagate through the whole compact space, twisted states are localised at the fixed points of the orbifold action, and at a fixed point the gauge shift and the Wilson lines leave a local gauge group that is typically larger than the four-dimensional one \cite{Forste:2004ie}. The local group can be grand-unified even when the four-dimensional group is that of the Standard Model, and the matter localised there then fills complete multiplets of it; a $\rep{16}$ of a local $SO(10)$ carrying one generation is the standard example \cite{Kobayashi:2004ya,Buchmuller:2006ik,Lebedev:2006kn}. Equivalent fixed points carry copies of the same local multiplet, which is one way families are replicated \cite{Kobayashi:2004ya,Buchmuller:2005jr}, and the permutations of equivalent fixed points, together with the space-group selection rules, act on the localised families as non-Abelian discrete flavour symmetries such as $D_4$ and $\Delta(54)$ \cite{Kobayashi:2004ya,Kobayashi:2006wq}. The couplings follow the geometry as well. A Yukawa coupling can come directly from the gauge interactions and be of the order of the gauge coupling, as the top-quark coupling does in the model of refs.~\cite{Buchmuller:2005jr,Buchmuller:2006ik}; couplings among twisted states at different fixed points are mediated by worldsheet instantons \cite{Hamidi:1986vh,Dixon:1986qv} and depend strongly on the relative position of the fields \cite{Forste:2004ie}; and which couplings are allowed at all is decided by string selection rules \cite{Kobayashi:2006wq,CaboBizet:2013gns}.

It is therefore natural to ask how much matter of one kind a single fixed point can hold, and the answer is arithmetic. A twisted state is built on one element of the space group, and that element fixes both the locus on which the state sits --- a point in the directions the twist rotates --- and the gauge shift it feels. Two massless states with the same Standard-Model quantum numbers, built on the same element and not related by a Weyl reflection of $SU(3)_c\times SU(2)_L$, have gauge momenta whose difference is a non-zero vector of the gauge lattice. The lattice is even, so the difference has squared length at least two, and the triangle inequality turns this into a lower bound on the massless budget of the sector. Under the hypotheses of section~\ref{sec:bound}, which require the Standard-Model factors to survive at level one and the two states to share their constructing element and their representation, evenness is the only property of the lattice that enters: the local gauge group, the geometry beyond the vacuum-energy shift, the Wilson lines and the number of $U(1)$ factors all drop out.

Applied to several states at once, the same inequality fixes their structure. Let $R$ be the squared length of the part of a state's gauge momentum orthogonal to the Standard-Model Cartan directions. When $R<1$, the pairwise differences of $n$ oscillator-free states of one representation at one fixed point are roots spanning an $A_{n-1}$ system orthogonal to the Standard Model, and the states carry the weights of the fundamental representation of the $SU(n)$ these roots generate inside the local gauge group. The bound on $n$ becomes the statement that the level-one conformal weight of that representation, $(n-1)/2n$, cannot exceed $R/2$. Since a Standard-Model representation fixes the length of the projection of its momentum onto the Cartan directions, and at the canonical normalisation of hypercharge that length takes only the two values of the $\rep{10}$ and the $\rep5$ of $SU(5)$, the condition becomes a threshold on the massless budget of a sector, higher for $Q$, $u^c$ and $e^c$ than for $d^c$ and the lepton doublet.

Section~\ref{sec:setup} fixes the conventions and section~\ref{sec:norms} derives the two projected norms. Section~\ref{sec:bound} proves the bound, and section~\ref{sec:ceiling} the local $SU(n)$ and the multiplicity ceiling that follows from it. Section~\ref{sec:verif} surveys how closely the complete massless spectra of two orbifold catalogues approach that ceiling. Section~\ref{sec:construction} exhibits a model in which a \ten{} representation is doubled at a fixed point and classifies, for its geometry and budget, the shift vectors that allow this; section~\ref{sec:signature} shows how such a configuration appears in a spectrum on that geometry, and section~\ref{sec:discussion} discusses the scope of the results.

%%%%%%%%%%%%%%%%%%%%%%%%%%%%%%%%%%%%%%%%%%%%%%%%%%%%%%%%%%%%%%%%%%%%%%%%%%%%%%%
\section{Gauge momenta at a twisted fixed point}
\label{sec:setup}

We consider the $E_8\times E_8$ heterotic string \cite{Gross:1984dd,Gross:1985fr} on a toroidal orbifold $T^6/(\ZZ_M\times\ZZ_N)$ \cite{Dixon:1985jw,Dixon:1986jc} with point-group generators $\theta,\omega$, gauge shifts $V_1,V_2$ and discrete Wilson lines $W_a$, $a=1,\dots,6$ \cite{Ibanez:1986tp}. The gauge degrees of freedom live on $\Lambda=\Lambda_{E_8}\oplus\Lambda_{E_8}\subset\RR^{16}$, which is even, $|\lambda|^2\in2\ZZ$ for every $\lambda\in\Lambda$, and therefore integral, since $\lambda\cdot\mu=\tfrac12(|\lambda+\mu|^2-|\lambda|^2-|\mu|^2)$. Its roots, the vectors of squared length $2$, number $240$ in each factor.

A twisted sector is constructed by a space-group element $g=(\theta^{k}\omega^{l},\sum_a n_ae_a)$ with $k,l,n_a\in\ZZ$. Its point-group part acts on the gauge degrees of freedom through the shifts and its translation through the Wilson lines, so that the \emph{local shift} at $g$ is
\begin{equation}
\label{eq:Vg}
V_g \;=\; k\,V_1+l\,V_2+\sum_a n_a W_a .
\end{equation}
A massless left-chiral state in the sector carries a shifted gauge momentum $p\in V_g+\Lambda$ of fixed length,
\begin{equation}
\label{eq:budget}
\tfrac12|p|^2 \;=\; 1-\delta c(g)-\Nosc \;=:\; B ,
\end{equation}
where $\delta c(g)=\tfrac12\sum_{a=1}^{3}\eta_a(1-\eta_a)$ is the vacuum-energy shift built from the twist eigenvalues $\eta_a\in[0,1)$ of the point-group part of $g$ and $\Nosc\ge0$ is the left-moving oscillator number. We call $B$ the \emph{massless budget} and take \eqref{eq:budget} in the conventions of the \textsc{orbifolder} \cite{Nilles:2011aj}, which we use throughout. The admissible oscillator numbers are sums of the $\eta_a$ and $1-\eta_a$ over the fractionally twisted planes, so the smallest excitation on a geometry can be a small fraction. Because $\delta c$ depends on the point-group part of $g$ alone, and neither on its translation, nor on the Wilson lines, nor on the state, two states built on one element satisfy
\begin{equation}
\label{eq:budgetdiff}
B-B'=\Nosc'-\Nosc ,
\end{equation}
and two oscillator-free states in particular have equal budgets.

Let $W\subset\RR^{16}$ be the span of the two $SU(3)_c$ Cartan directions, the $SU(2)_L$ Cartan direction and the hypercharge generator $t_Y$, so that $\dim W=4$. Here $t_Y$ is the hypercharge generator of the model, as given by the \textsc{orbifolder}. With $P_W$ the orthogonal projector onto $W$ we write
\begin{equation}
\label{eq:split}
p=w+s,\qquad w=P_Wp,\quad s=p-w\perp W,\qquad R:=|s|^2=2B-|w|^2 ,
\end{equation}
and, for any vector $x$, $x_\perp:=x-P_Wx$, so that $s=p_\perp$. The \emph{Standard-Model representation} of a state is its set of quantum numbers under $SU(3)_c\times SU(2)_L\times U(1)_Y$ when these are those of $Q$, $u^c$, $e^c$, $d^c$, $L$ or the doublet of hypercharge $+\tfrac12$, or of the conjugate of one of them. When two momenta share their $W$-projection we write $\Delta$ for their difference; since the pair is unordered, $\Delta$ is defined up to sign, and its pairing with a momentum appears as $|p\cdot\Delta|$.

Two states sit at the \emph{same twisted fixed point} when one space-group element, point-group part and translation alike, constructs both, so that the same combination of gauge shifts and Wilson lines acts on both and the same vacuum-energy shift applies. Conjugate elements may be used interchangeably: for $g=(\theta_g,\lambda)$ and $h=(\theta_h,\lambda_h)$ one has $hgh^{-1}=(\theta_g,(I-\theta_g)\lambda_h+\theta_h\lambda)$, so $V_{hgh^{-1}}-V_g$ is a sum of two terms that vanish modulo $\Lambda$ by the Wilson-line consistency conditions, while $\delta c$ depends only on the point-group part, which conjugation leaves fixed. The spectrum lists one element per conjugacy class, and we use \emph{fixed point} and \emph{constructing element} interchangeably in this sense. Identifying fixed points by their local shift would manufacture repetitions, since distinct fixed points of one sector often share it: in the first of the two catalogues of section~\ref{sec:verif}, $674{,}558$ pairs of distinct fixed points of one sector that carry massless states in the same model do. A point of the covering torus errs the other way, since one listed element is a class of space-group elements which, except where the twist leaves a single fixed point per sector, contains several such points.

%%%%%%%%%%%%%%%%%%%%%%%%%%%%%%%%%%%%%%%%%%%%%%%%%%%%%%%%%%%%%%%%%%%%%%%%%%%%%%%
\section{Projected norms of Standard-Model representations}
\label{sec:norms}

At the canonical normalisation of hypercharge each Standard-Model representation fixes $|w|^2$, and only two values occur.

\begin{proposition}[projected norms]
\label{prop:norms}
Let $SU(3)_c\times SU(2)_L\times U(1)_Y$ be identified in the model, with hypercharge at the level $k_Y:=2|t_Y|^2=5/3$. Then
\begin{equation}
\label{eq:wnorms}
|w|^2=\tfrac65 \ \text{ for } Q,\ u^c,\ e^c ,
\qquad
|w|^2=\tfrac45 \ \text{ for } d^c \text{ and the lepton doublet},
\end{equation}
and conjugate representations have the same $|w|^2$.
\end{proposition}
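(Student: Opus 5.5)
The plan is to split $W$ into mutually orthogonal pieces and compute $|w|^2$ piece by piece. I write $W=W_3\oplus W_2\oplus\RR t_Y$, where $W_3$ is spanned by the $SU(3)_c$ Cartan directions and $W_2$ by the $SU(2)_L$ one. The summands are mutually orthogonal because hypercharge commutes with $SU(3)_c\times SU(2)_L$: as a Cartan vector it is orthogonal to all roots of those factors, hence to their coroots. The roots of $SU(3)_c$ and of $SU(2)_L$ are roots of $\Lambda$, and orthogonal ones at that, so they are orthogonal to each other. Then
\begin{equation*}
|w|^2=|P_{W_3}p|^2+|P_{W_2}p|^2+\frac{(p\cdot t_Y)^2}{|t_Y|^2}.
\end{equation*}

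For the nonabelian pieces I use the level-one statement of the hypotheses. The simple roots of $SU(3)_c$ and $SU(2)_L$ are lattice roots of length $\sqrt2$. The projection $P_{W_3}p$ is therefore the weight of $p$ in the $A_2$ weight space with roots normalised to length squared $2$, and its squared length is the usual quadratic form of that weight.
\begin{itemize}
\item A weight of the $\rep3$ or $\arep3$ has squared length $2/3$.
\item A weight of the $\rep2$ has squared length $1/2$.
\item A singlet contributes $0$.
\end{itemize}
These values hold for every weight in the orbit, since all weights of a minuscule representation form one Weyl orbit. The hypercharge piece is $Y^2/|t_Y|^2=6Y^2/5$ with $|t_Y|^2=5/6$, where $Y=p\cdot t_Y$ is the hypercharge in the orbifolder's normalisation (charges $1/6,-2/3,1,1/3,-1/2$). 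I will need a sentence to fix this convention, since the constant $k_Y=5/3$ encodes the $SU(5)$ normalisation only once $Y$ is defined as $p\cdot t_Y$.

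The remaining step is arithmetic over the representations.
\begin{align*}
Q&:\ \tfrac23+\tfrac12+\tfrac65\cdot\tfrac1{36}=\tfrac76+\tfrac1{30}=\tfrac65,\\
u^c&:\ \tfrac23+\tfrac65\cdot\tfrac49=\tfrac23+\tfrac8{15}=\tfrac65,\\
e^c&:\ \tfrac65,\\
d^c&:\ \tfrac23+\tfrac65\cdot\tfrac19=\tfrac45,\\
L&:\ \tfrac12+\tfrac65\cdot\tfrac14=\tfrac45.
\end{align*}
The doublet of hypercharge $+\tfrac12$ is the conjugate of $L$, and it gives $\tfrac45$ as well. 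Conjugation sends $p$ to an element whose $W$-projection is $-w$ up to Weyl reflection, since the conjugate representation has the negated weights and negated $Y$, so $|w|^2$ is unchanged.

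The main obstacle is to justify that the $SU(3)_c$ and $SU(2)_L$ projections are computed with lattice roots of norm $2$, which requires level one, together with the orthogonality of $t_Y$ to the nonabelian Cartan directions. Orthogonality holds because $t_Y$ must assign the same charge to every member of a multiplet, so $t_Y\cdot\alpha=0$ for each root $\alpha$ of $SU(3)_c\times SU(2)_L$. I will state both facts as consequences of the identification assumed in the proposition and then carry out the sum above.
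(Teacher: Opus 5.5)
Your proposal is correct and follows the paper's proof essentially step by step. It uses the same orthogonal splitting $W=W_3\oplus W_2\oplus\RR t_Y$, the same level-one norms $2/3$ and $1/2$ for triplet and doublet weights, the hypercharge term $6Y^2/5$ from $|t_Y|^2=5/6$, the same arithmetic over the representations, and the same handling of conjugates by negating the weights and $Y$.
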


\begin{proof}
The span $W$ splits orthogonally as $W=W_3\oplus W_2\oplus\RR t_Y$: the $SU(3)_c$ and $SU(2)_L$ Cartans are orthogonal because the two factors commute and their roots are mutually orthogonal, and $t_Y$ is orthogonal to every root of either factor because $U(1)_Y$ commutes with both. Hence $|w|^2=|w_3|^2+|w_2|^2+|w_Y|^2$. With roots normalised to $|\alpha|^2=2$, the fundamental weight of $A_{n-1}$ has $|\omega_1|^2=(n-1)/n$, so $|w_3|^2=2/3$ for a $\rep3$ or $\arep3$ and $0$ for a singlet; the doublet weights are $\pm\alpha/2$, so $|w_2|^2=|\alpha|^2/4=1/2$ for a doublet and $0$ for a singlet; and $w_Y=\bigl((p\cdot t_Y)/|t_Y|^2\bigr)t_Y$ gives $|w_Y|^2=2Y^2/k_Y$, with $Y=p\cdot t_Y$ the hypercharge in the convention in which $t_Y$ is written, which is $\tfrac65Y^2$ at $k_Y=5/3$. For $Q$ and $L$ this reads
\begin{equation*}
|w_Q|^2=\tfrac23+\tfrac12+\frac{(1/6)^2}{5/6}=\tfrac{20}{30}+\tfrac{15}{30}+\tfrac1{30}=\tfrac65 ,
\qquad
|w_L|^2=\tfrac12+\frac{(1/2)^2}{5/6}=\tfrac12+\tfrac3{10}=\tfrac45 ,
\end{equation*}
and table~\ref{tab:norms} lists all six cases. The weights of a conjugate representation are the negatives of those of the representation, which leaves $|w|^2$ unchanged.
\end{proof}

\begin{table}[t]
\centering
\begin{tabular}{lccccc}
\toprule
representation & $(SU(3)_c,SU(2)_L,Y)$ & $|w_3|^2$ & $|w_2|^2$ & $\tfrac65Y^2$ & $|w|^2$\\
\midrule
$Q$   & $(\rep3,\rep2,+\tfrac16)$        & $2/3$ & $1/2$ & $1/30$  & $6/5$ \\
$u^c$ & $(\arep3,\rep1,-\tfrac23)$       & $2/3$ & $0$   & $8/15$  & $6/5$ \\
$e^c$ & $(\rep1,\rep1,+1)$               & $0$   & $0$   & $6/5$   & $6/5$ \\
$d^c$ & $(\arep3,\rep1,+\tfrac13)$       & $2/3$ & $0$   & $2/15$  & $4/5$ \\
$L$   & $(\rep1,\rep2,-\tfrac12)$        & $0$   & $1/2$ & $3/10$  & $4/5$ \\
$\bar L$ & $(\rep1,\rep2,+\tfrac12)$     & $0$   & $1/2$ & $3/10$  & $4/5$ \\
\bottomrule
\end{tabular}
\caption{Projected norms of the six Standard-Model representations at $k_Y=5/3$.}
\label{tab:norms}
\end{table}

The two values are those of $SU(5)$. At level one a primary of weight $\lambda$ of a simply-laced algebra has conformal weight $h=|\lambda|^2/2$, so $|w|^2=2h$. With its induced inner product, $W=W_3\oplus W_2\oplus\RR t_Y$ is isometric to the Cartan subspace of $SU(5)$, the $SU(3)_c$ and $SU(2)_L$ Cartans and $t_Y$ going to those of the standard embedding of $SU(3)\times SU(2)\times U(1)_Y$, because at $k_Y=5/3$ the length $|t_Y|^2=5/6$ is that of $\mathrm{diag}(-\tfrac13,-\tfrac13,-\tfrac13,\tfrac12,\tfrac12)$. For $A_4$ with $|\alpha|^2=2$ one has $|\omega_i|^2=i(5-i)/5$, hence $h(\rep{10})=3/5$ and $h(\rep5)=2/5$, which reproduces $6/5$ and $4/5$; for $Q$ the three contributions are $\tfrac13+\tfrac14+\tfrac1{60}=\tfrac35$. Only the Cartan subspace enters, and no unbroken $SU(5)$ is assumed. We call $Q$, $u^c$, $e^c$ and their conjugates the \emph{\ten} representations and $d^c$, $L$ and their conjugates the \emph{\fbar} ones, after the $SU(5)$ multiplets whose norm they share. Under $SU(5)$, $\rep5=(\rep3,\rep1)_{-1/3}\oplus(\rep1,\rep2)_{+1/2}$ and $\arep5=(\arep3,\rep1)_{+1/3}\oplus(\rep1,\rep2)_{-1/2}=d^c\oplus L$, so $H_u$ sits in a $\rep5$ and $H_d$ in a $\arep5$. The doublet $(\rep1,\rep2)_{+1/2}$ that phenomenology calls $H_u$ is the conjugate of $L$, so up to conjugation there are five Standard-Model representations.

The normalisation of hypercharge deserves a comment. A projector onto a span is unchanged when a basis vector is rescaled, so $|w|^2$ does not depend on the normalisation of $t_Y$; the numerals $6/5$ and $4/5$ do. With the Standard-Model hypercharges fixed in the convention in which $t_Y$ is written, $|w|^2=|w_3|^2+|w_2|^2+2Y^2/k_Y$, and
\begin{gather*}
|w_Q|^2=|w_{e^c}|^2\iff\tfrac76+\tfrac1{18k_Y}=\tfrac2{k_Y}\iff k_Y=\tfrac53,\\
|w_{d^c}|^2=|w_{L}|^2\iff\tfrac23+\tfrac2{9k_Y}=\tfrac12+\tfrac1{2k_Y}\iff k_Y=\tfrac53 ,
\end{gather*}
and likewise $|w_{u^c}|^2=|w_{e^c}|^2\iff k_Y=\tfrac53$, each equality having this as its only positive solution. The norms are equal within each class exactly at the grand-unified normalisation, which in heterotic orbifold models is what an embedding of the Standard Model into a simple local group containing $SU(5)$ provides \cite{Lebedev:2006kn}. We take the hypercharge generator from the model, as we do the $SU(3)_c$ and $SU(2)_L$ factors. In every model of the two catalogues of section~\ref{sec:verif} it has $k_Y=5/3$, which is a useful check on the identification of the Standard Model.

%%%%%%%%%%%%%%%%%%%%%%%%%%%%%%%%%%%%%%%%%%%%%%%%%%%%%%%%%%%%%%%%%%%%%%%%%%%%%%%
\section{An even-lattice bound}
\label{sec:bound}

\begin{theorem}[localisation bound]
\label{thm:bound}
Let $p$ and $p'$ be the gauge momenta of two massless left-chiral states in a twisted sector, with budgets $B,B'$. Suppose:
\begin{enumerate}
\item[\textbf{H1}] the gauge lattice $\Lambda=\Lambda_{E_8}\oplus\Lambda_{E_8}$ is even;
\item[\textbf{H2}] $SU(3)_c\times SU(2)_L\times U(1)_Y$ is contained in the four-dimensional gauge group; equivalently, every $SU(3)_c$ and $SU(2)_L$ root $\alpha$ --- an element of $\Lambda$ with $|\alpha|^2=2$, the Standard-Model factors being realised at level one --- satisfies $\alpha\cdot V_1,\ \alpha\cdot V_2,\ \alpha\cdot W_a\in\ZZ$ for every $a$;
\item[\textbf{H3}] both states are built on the same constructing space-group element $g$ in the sense of section~\ref{sec:setup}, hence on the same local shift $V_g$ of \eqref{eq:Vg}, Wilson lines included, and the same $\delta c(g)$;
\item[\textbf{H4}] the two states carry the same Standard-Model representation, hence the same $|w|^2$;
\item[\textbf{H5}] there is no $\sigma\in\Wsm$ with $p'=\sigma(p)$, where $\Wsm$ is the group generated by the reflections $\sigma_\alpha(x)=x-(x\cdot\alpha)\alpha$ in the $SU(3)_c$ and $SU(2)_L$ roots $\alpha$.
\end{enumerate}
Then, with $R=2B-|w|^2$ and $R'=2B'-|w|^2$,
\begin{equation}
\label{eq:mainbound}
\sqrt{R}+\sqrt{R'}\ \ge\ \sqrt{2}.
\end{equation}
\end{theorem}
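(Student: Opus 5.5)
The plan is to reduce to two momenta with identical projection onto $W$, subtract them, and apply evenness of $\Lambda$ to the difference, which will lie entirely in $W^\perp$.

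\emph{Step 1: the Weyl group preserves the coset.} First I check that $\Wsm$ maps the coset $V_g+\Lambda$ to itself. Take a root $\alpha$ of $SU(3)_c$ or $SU(2)_L$ and write $p=V_g+\lambda$. By H2 and \eqref{eq:Vg}, $\alpha\cdot V_g=k\,\alpha\cdot V_1+l\,\alpha\cdot V_2+\sum_a n_a\,\alpha\cdot W_a\in\ZZ$. Since $\Lambda$ is even, hence integral (H1), also $\alpha\cdot\lambda\in\ZZ$. So $\sigma_\alpha(p)=p-(p\cdot\alpha)\alpha$ differs from $p$ by an integer multiple of $\alpha\in\Lambda$, and stays in $V_g+\Lambda$. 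The coset depends only on $g$, which is the same for both states by H3, so for every $\sigma\in\Wsm$ we get $\sigma(p')\in V_g+\Lambda$ and $p-\sigma(p')\in\Lambda$.

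\emph{Step 2: how $\Wsm$ acts on $W$ and on $W^\perp$.} The $SU(3)_c$ and $SU(2)_L$ roots lie in $W_3\oplus W_2\subset W$. Each reflection $\sigma_\alpha$ therefore fixes $W^\perp$ pointwise. It also fixes $t_Y$, since $t_Y$ is orthogonal to every such root, as in the proof of Proposition~\ref{prop:norms}. Hence $\sigma$ commutes with $P_W$, preserves the hypercharge, and gives $\sigma(p')_\perp=s'$ while acting on $w'$ as the Weyl group of $SU(3)\times SU(2)$ on $W_3\oplus W_2$.

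\emph{Step 3: matching the $W$-projections.} By H4 both states carry the same representation. So $w$ and $w'$ have the same hypercharge component, and their $W_3\oplus W_2$ components are weights of the same irreducible representation: a $\rep3$, $\arep3$, $\rep2$, a product of these, or a singlet. All of these are minuscule, so their weights form a single Weyl orbit. Hence there is a $\sigma\in\Wsm$ with $P_W\sigma(p')=w$.

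\emph{Step 4: evenness and the triangle inequality.} Set $\Delta=p-\sigma(p')$. By Steps 1 and 2, $\Delta\in\Lambda$ and $\Delta=s-s'$. By H5, $\Delta\neq0$, so evenness gives $|\Delta|^2\ge2$. The triangle inequality then yields $\sqrt2\le|s-s'|\le|s|+|s'|=\sqrt R+\sqrt{R'}$, where $R=2B-|w|^2$ and $R'=2B'-|w|^2$ follow from \eqref{eq:split} together with $|w'|^2=|w|^2$.

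The main obstacle is Step 3. I need the weight of a state under $SU(3)_c\times SU(2)_L$ to be determined by $P_Wp$ within the span of the Cartan directions, so that "same representation" really means the two projections are Weyl-related rather than merely of equal length. I would settle this by the minuscule-orbit argument on each factor separately, using the orthogonal splitting $W=W_3\oplus W_2\oplus\RR t_Y$ from Proposition~\ref{prop:norms}.
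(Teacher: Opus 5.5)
Your proof is correct and follows essentially the paper's route: your Steps 1--2 are the paper's lemma on the Weyl action on the coset and on $W^\perp$, and where you move $p'$ by a single Weyl element onto $P_Wp$, the paper compares the dominant (highest-weight) representatives $p^+$ and $p'^+$. The paper then remarks that any such $\sigma$ gives the same $\Delta=p_\perp-p'_\perp$, and your triangle inequality is its Cauchy--Schwarz step.
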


The proof compares the highest-weight states of the two $SU(3)_c\times SU(2)_L$ multiplets. The Standard-Model Weyl group preserves the coset in which the momenta lie and moves only their $W$-components. Because the representations that occur are minuscule, the two highest-weight states have the same $W$-projection, and the difference of their momenta is a non-zero vector of an even lattice. Nothing about the hidden sector enters.

\begin{lemma}[the Standard-Model Weyl group on the coset]
\label{lem:weyl}
Assume H1--H3. Then
\emph{(i)} $\alpha\cdot V_g\in\ZZ$ for every $SU(3)_c$ or $SU(2)_L$ root $\alpha$;
\emph{(ii)} $\sigma_\alpha$ maps the coset $V_g+\Lambda$ onto itself;
\emph{(iii)} every $\sigma\in\Wsm$ fixes $W^{\perp}$ pointwise, preserves $W$ and commutes with $P_W$, so that $|\sigma x|=|x|$ and $(\sigma x)_\perp=x_\perp$ for every $x$.
\end{lemma}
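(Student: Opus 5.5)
The three parts follow in order from the form \eqref{eq:Vg} of the local shift, the integrality of $\Lambda$, and the fact that the $SU(3)_c$ and $SU(2)_L$ roots lie in $W$. I would prove them directly, one after another.

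For \emph{(i)}, write $V_g=kV_1+lV_2+\sum_a n_aW_a$ as in \eqref{eq:Vg}, with $k,l,n_a\in\ZZ$ by H3. For an $SU(3)_c$ or $SU(2)_L$ root $\alpha$, hypothesis H2 gives $\alpha\cdot V_1$, $\alpha\cdot V_2$ and $\alpha\cdot W_a$ in $\ZZ$. Then $\alpha\cdot V_g$ is an integer combination of integers, so $\alpha\cdot V_g\in\ZZ$.

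For \emph{(ii)}, take $x=V_g+\lambda$ with $\lambda\in\Lambda$. Then $\sigma_\alpha(x)=x-(x\cdot\alpha)\alpha$. H2 places $\alpha$ in $\Lambda$, and H1 makes $\Lambda$ integral, as noted in section~\ref{sec:setup}. Hence $\alpha\cdot\lambda\in\ZZ$. Combined with (i), this gives $x\cdot\alpha\in\ZZ$, so $(x\cdot\alpha)\alpha\in\Lambda$ and $\sigma_\alpha(x)\in V_g+\Lambda$. Since $\sigma_\alpha^2=1$, the map is a bijection of the coset onto itself. The statement extends to all of $\Wsm$ because the group is generated by these reflections.

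For \emph{(iii)}, the key observation is that every $SU(3)_c$ and $SU(2)_L$ root lies in $W$. The Cartan directions of a simple factor are spanned by its roots, and $W$ contains those Cartan directions by definition. So for $x\in W^\perp$ one has $x\cdot\alpha=0$, hence $\sigma_\alpha x=x$, and $\sigma_\alpha$ fixes $W^\perp$ pointwise. Because $\alpha\in W$, the map $\sigma_\alpha$ also sends $W$ into $W$. A reflection is an orthogonal map, so $|\sigma_\alpha x|=|x|$. An orthogonal map that preserves both $W$ and $W^\perp$ commutes with $P_W$. All of these properties are stable under composition, so they hold for every $\sigma\in\Wsm$. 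Finally, compute
\[
(\sigma x)_\perp=\sigma x-P_W\sigma x=\sigma(x-P_Wx)=\sigma(x_\perp)=x_\perp,
\]
where the last equality holds because $x_\perp\in W^\perp$.

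The only step requiring care is the claim that the roots lie in $W$, that is, that the Cartan directions named in the definition of $W$ are precisely the spans of the $SU(3)_c$ and $SU(2)_L$ roots. I would justify this from the level-one realisation in H2: the roots are lattice vectors of the unbroken factors. Note that $t_Y$ plays no role in this lemma.
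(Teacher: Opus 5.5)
Your proposal is correct and follows the paper's proof essentially step for step. Part (i) comes from the integer coefficients in \eqref{eq:Vg} together with H2, part (ii) from the integrality of $\Lambda$ plus (i), with $\sigma_\alpha^2=\mathrm{id}$ giving surjectivity, and part (iii) from the roots lying in $W$; the differences are cosmetic: you get $\sigma_\alpha W\subset W$ directly from $\alpha\in W$ where the paper uses $W=(W^\perp)^\perp$, and you write out the computation $(\sigma x)_\perp=x_\perp$ that the paper leaves implicit.
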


\begin{proof}
(i) By H2, $\alpha\cdot V_1$, $\alpha\cdot V_2$ and $\alpha\cdot W_a$ are integers; take the inner product of \eqref{eq:Vg} with $\alpha$, the coefficients $k,l,n_a$ being integers.

(ii) Write $p=\lambda+V_g$ with $\lambda\in\Lambda$. In $p\cdot\alpha=\lambda\cdot\alpha+V_g\cdot\alpha$ the first term is an integer because $\Lambda$ is integral by H1 and $\alpha\in\Lambda$, and the second by (i). So $\sigma_\alpha(p)=p-m\alpha$ with $m\in\ZZ$, whence $\sigma_\alpha(p)\in V_g+\Lambda$, and $\sigma_\alpha^2=\mathrm{id}$ gives surjectivity.

(iii) Each $SU(3)_c$ or $SU(2)_L$ root lies in the span of its factor's Cartan directions, hence in $W$, so for $x\perp W$ one has $x\cdot\alpha=0$ and $\sigma_\alpha(x)=x$. An orthogonal map that fixes $W^\perp$ pointwise maps $W=(W^\perp)^\perp$ to itself, so $P_W\sigma_\alpha=\sigma_\alpha P_W$, and products of such reflections inherit these properties.
\end{proof}

Fix simple roots of $SU(3)_c$ and $SU(2)_L$ and write $w=w_3+w_2+w_Y$ along $W=W_3\oplus W_2\oplus\RR t_Y$. The group $\Wsm=W_{SU(3)}\times W_{SU(2)}$ acts on $w_3$ and $w_2$ as the Weyl group of each factor, fixes $w_Y$ because $\alpha\cdot t_Y=0$, and fixes $s$ by lemma~\ref{lem:weyl}(iii). Every Weyl orbit meets the dominant chamber exactly once, so each momentum $p$ has a unique element $p^+$ of its orbit $\Wsm p$ whose $SU(3)_c$ and $SU(2)_L$ weights are dominant. For the representations at hand, $p^+$ is the momentum of the highest-weight state of the multiplet containing the state.

\begin{lemma}[highest-weight states and their difference]
\label{lem:hw}
Assume H1--H4. Then $p^+$ and $p'^+$ lie in $V_g+\Lambda$, with $|p^+|=|p|$, $|p'^+|=|p'|$, $p^+_\perp=p_\perp$, $p'^+_\perp=p'_\perp$ and $P_Wp^+=P_Wp'^+$. Consequently $\Delta:=p^+-p'^+$ equals $p_\perp-p'_\perp$, lies in $\Lambda$ and is orthogonal to $W$, and $\Delta\ne0$ if and only if H5 holds. Under H1--H5, therefore, $|\Delta|^2\ge2$.
\end{lemma}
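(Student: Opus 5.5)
The plan is to prove Lemma~\ref{lem:hw} in four steps, using Lemma~\ref{lem:weyl} for the first and the fact that the Standard-Model representations are minuscule for the rest.

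\emph{Step 1: membership and lengths.} Since $p^+=\sigma(p)$ for some $\sigma\in\Wsm$, and $\sigma$ is a product of reflections $\sigma_\alpha$, Lemma~\ref{lem:weyl}(ii) applied repeatedly gives $p^+\in V_g+\Lambda$. Lemma~\ref{lem:weyl}(iii) gives $|p^+|=|p|$ and $p^+_\perp=p_\perp$. The same holds for $p'$.

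\emph{Step 2: equal $W$-projections.} This is the step I expect to carry the content. Write $P_Wp^+=w_3^++w_2^++w_Y$. By H4 both states have the same hypercharge, so the $w_Y$ components agree. Since $\Wsm$ fixes $w_Y$, it suffices to show that $w_3^+$ and $w_2^+$ coincide for the two states. Under H4 each state's $SU(3)_c$ weight is a weight of the same representation: $\rep3$, $\arep3$ or $\rep1$. Each of these is minuscule, so all its weights form a single Weyl orbit whose unique dominant element is the highest weight. Hence both dominant $SU(3)_c$ weights equal that highest weight, and likewise for the $SU(2)_L$ doublet or singlet. To make this precise, I would identify the $SU(3)_c$ weight of a state with the vector $w_3\in W_3$ through the pairings $p\cdot\alpha$ with the simple roots. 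Then $w_3$ is determined by its Dynkin labels, and "dominant" means both labels are nonnegative.

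\emph{Step 3: properties of $\Delta$.} Set $\Delta=p^+-p'^+$. By Steps 1 and 2, $\Delta=p_\perp-p'_\perp$. Because both momenta lie in $V_g+\Lambda$, their difference lies in $\Lambda$. By construction $\Delta$ is orthogonal to $W$.

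\emph{Step 4: the equivalence with H5, and the conclusion.} Suppose $\Delta=0$. Then $p'^+=p^+$, so $p'=\tau^{-1}\sigma(p)$, where $\sigma$ and $\tau$ are the Weyl elements taking $p$ and $p'$ to the dominant chamber; this contradicts H5. Conversely, suppose $p'=\sigma(p)$ for some $\sigma\in\Wsm$. Then the two momenta lie in the same orbit, and uniqueness of the dominant element gives $p^+=p'^+$, so $\Delta=0$. Under H1--H5 we therefore have $0\neq\Delta\in\Lambda$. Evenness (H1) gives $|\Delta|^2\in2\ZZ_{>0}$, hence $|\Delta|^2\ge2$.

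The only subtle point is the minuscule argument in Step 2, which relies on the list of representations in H4 containing only singlets, fundamentals and antifundamentals. I would state this restriction explicitly.
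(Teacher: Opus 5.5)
Your proposal is correct and follows the paper's proof essentially step for step: lemma~\ref{lem:weyl}(ii),(iii) for the coset membership, the lengths and the $W^\perp$-components; H4 for the agreement of $w_Y$; the minuscule property of $\rep3$, $\arep3$, $\rep2$ and the singlet to identify both dominant $SU(3)_c$ and $SU(2)_L$ components with the common highest weight; and the two-way orbit argument for ``$\Delta=0$ if and only if H5 fails'', followed by evenness. Your reading of the weights through the Dynkin labels $p\cdot\alpha$ only makes explicit what the paper leaves implicit.
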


\begin{proof}
The statements before the last equality are lemma~\ref{lem:weyl}(ii),(iii). By H4 the two states have the same $SU(3)_c$ and $SU(2)_L$ representations and the same hypercharge, so their $w_Y$ agree. The representations that occur, $\rep3$, $\arep3$, $\rep2$ and the singlet, are minuscule: their weights form a single Weyl orbit, which meets the dominant chamber in the highest weight. The $SU(3)_c$ and $SU(2)_L$ components of $P_Wp^+$ and of $P_Wp'^+$ are therefore both the highest weight of the common representation. It follows that $\Delta$ has vanishing $W$-component and $W^\perp$-component $p_\perp-p'_\perp$, and it lies in $\Lambda$ as the difference of two elements of one coset. If $p'=\sigma(p)$ with $\sigma\in\Wsm$, the two orbits coincide and $p'^+=p^+$; conversely, $\Delta=0$ means $\tau(p)=\tau'(p')$ for some $\tau,\tau'\in\Wsm$, and then $p'=\tau'^{-1}\tau(p)$. Under H5, $\Delta$ is thus a non-zero vector of the even lattice $\Lambda$, and $|\Delta|^2$ is a positive even integer.
\end{proof}

The vector $\Delta$ depends on no choice. Changing the simple roots applies one element of $\Wsm$ to both $p^+$ and $p'^+$ and leaves $\Delta\perp W$ fixed, and for any $\sigma\in\Wsm$ that brings $p'$ to the $W$-projection of $p$ the difference $p-\sigma(p')$ has vanishing $W$-component and $W^\perp$-component $p_\perp-p'_\perp$, so it equals $\Delta$.

\begin{proof}[Proof of theorem~\ref{thm:bound}]
Write $s=p_\perp$ and $s'=p'_\perp$, so that $|s|^2=R$ and $|s'|^2=R'$ by \eqref{eq:split}, and $\Delta=s-s'$ by lemma~\ref{lem:hw}. Then
\begin{equation}
\label{eq:triangle}
2\ \le\ |\Delta|^2\ =\ R+R'-2\,s\cdot s'\ \le\ R+R'+2\sqrt{RR'}\ =\ \bigl(\sqrt R+\sqrt{R'}\bigr)^2 ,
\end{equation}
by Cauchy--Schwarz applied to $-s\cdot s'$. Take square roots.
\end{proof}

A Weyl reflection is an isometry, so $p'=\sigma(p)$ forces $|p'|^2=|p|^2$, and at a common constructing element \eqref{eq:budget} then forces $\Nosc'=\Nosc$. A pair excluded by H5 is therefore one of two things: two states of one $SU(3)_c\times SU(2)_L$ multiplet, the case H5 is designed to exclude, or states of two different multiplets on the same highest-weight momentum, distinguished by different oscillator excitations of the same oscillator number or by different right-moving states. For such a pair $\Delta=0$ and the theorem is silent; section~\ref{sec:verif:h5} shows that no pair of this kind carries a Standard-Model representation in the catalogues.

For oscillator-free states \eqref{eq:budgetdiff} gives $B=B'$, hence $R=R'$, and the theorem reads
\begin{equation}
\label{eq:Rhalf}
R\ \ge\ \tfrac12 .
\end{equation}
Equality in \eqref{eq:triangle} requires $|\Delta|^2=2$ together with the Cauchy--Schwarz equality $s\cdot s'=-\sqrt{RR'}$. For equal budgets this gives $s'=-s$, $\Delta=2s$ and $4R=2$: the two states are the weights $\pm\Delta/2$ of the $SU(2)$ generated by the root $\pm\Delta$. Section~\ref{sec:ceiling} shows that this root belongs to the local gauge group and that the same structure persists away from equality.

Substituting $R=2B-|w|^2$ in \eqref{eq:Rhalf} turns the bound into a condition on the sector. Under H1--H5, two oscillator-free states of one Standard-Model representation at one constructing element require
\begin{equation}
\label{eq:threshold}
B\ \ge\ \frac{|w|^2}{2}+\frac14 ,
\qquad\text{i.e.}\qquad
B\ \ge\ \tfrac{17}{20}\ \ (\text{\ten}),
\qquad
B\ \ge\ \tfrac{13}{20}\ \ (\text{\fbar}),
\end{equation}
the two values being $\tfrac35+\tfrac14$ and $\tfrac25+\tfrac14$; the condition is necessary and not sufficient. Unlike the results of section~\ref{sec:ceiling}, it needs no restriction on $R$. Its constant is $\tfrac14=|\Delta|^2_{\min}/8$ with $|\Delta|^2_{\min}=2$, which is also the smallest non-zero conformal weight of a level-one primary of a simply-laced affine algebra, attained by the $A_1$ doublet. At level one the primaries are the fundamental weights of unit comark, with $h=|\lambda|^2/2$; the smallest value is $n/2(n+1)\ge\tfrac14$ for $A_n$, with equality at $n=1$, $\tfrac12$ for the vector of $D_n$ and $n/8\ge\tfrac12$ for its spinor at $n\ge4$, $\tfrac23$ for $E_6$ and $\tfrac34$ for $E_7$, and $E_8$ has no non-trivial level-one primary. The gap between the two thresholds, $\tfrac{17}{20}-\tfrac{13}{20}=\tfrac15$, is $h(\rep{10})-h(\rep5)$.

%%%%%%%%%%%%%%%%%%%%%%%%%%%%%%%%%%%%%%%%%%%%%%%%%%%%%%%%%%%%%%%%%%%%%%%%%%%%%%%
\section{The local \texorpdfstring{$SU(n)$}{SU(n)} and the multiplicity ceiling}
\label{sec:ceiling}

Applied to all pairs of a family at once, the argument of section~\ref{sec:bound} determines how the family is organised. Write
\begin{equation}
\label{eq:HVg}
H(V_g)=\{\alpha\in\Lambda:\ |\alpha|^2=2,\ \alpha\cdot V_g\in\ZZ\}
\end{equation}
for the roots of the local gauge group at $g$, the group left unbroken by the local shift \eqref{eq:Vg} alone, Wilson lines included.

\begin{proposition}[the local $SU(n)$]
\label{prop:sun}
Let $p_1,\dots,p_n$, $n\ge2$, be the gauge momenta of massless left-chiral states of one Standard-Model representation, all built on one constructing element $g$, with budgets $B_i$, oscillator numbers $N_{{\rm osc},i}$ and $R_i=2B_i-|w|^2$. Assume H1--H4, H5 for every pair, and
\begin{equation}
\label{eq:sundomain}
\sqrt{R_i}+\sqrt{R_j}<2\qquad(i\ne j),
\end{equation}
which holds in particular when every $R_i<1$. Let $q_i=p_i^+$, $\Delta_{ij}=q_i-q_j$ and $c=\tfrac1n\sum_k q_k$. Then:
\begin{enumerate}
\item[(a)] every $\Delta_{ij}$ is a root of $\Lambda$ orthogonal to $W$;
\item[(b)] $q_i\cdot\Delta_{ij}=1+B_i-B_j=1+N_{{\rm osc},j}-N_{{\rm osc},i}$, and therefore
  $\Delta_{ij}\cdot V_g\equiv N_{{\rm osc},j}-N_{{\rm osc},i} \pmod 1$;
\item[(c)] $q_i\cdot q_j=B_i+B_j-1$ for $i\ne j$; the vectors $\alpha_m=q_m-q_{m+1}$,
  $m=1,\dots,n-1$, have the Cartan matrix of $A_{n-1}$, and the $\Delta_{ij}$ are the
  $n(n-1)$ roots of the $A_{n-1}$ root system they span; the vectors $\mu_i=q_i-c$ satisfy
  $\mu_i\cdot\alpha_m=\delta_{im}-\delta_{i,m+1}$, so they are the $n$ weights of the
  fundamental representation, $\mu_1$ its highest weight, and $|\mu_i|^2=(n-1)/n$.
\end{enumerate}
If in addition $N_{{\rm osc},i}-N_{{\rm osc},j}\in\ZZ$ for every pair, in particular if every state is
oscillator-free, then:
\begin{enumerate}
\item[(d)] every $\Delta_{ij}$ is a root of the local group $H(V_g)$; the $\Delta_{ij}$ are
  the roots of a subgroup $SU(n)$ of the local gauge group which commutes with
  $SU(3)_c\times SU(2)_L\times U(1)_Y$, and the $n$ states carry the $n$ weights of its
  fundamental representation;
\item[(e)] if the states are oscillator-free, $c=w+c_\perp$ with $c_\perp$ orthogonal to $W$
  and to every $\Delta_{ij}$ and common to all $n$ states, and
  \begin{equation}
  \label{eq:sunsplit}
  R=\frac{n-1}{n}+|c_\perp|^2 ,\qquad\text{i.e.}\qquad
  \frac R2=h_n+\frac{|c_\perp|^2}{2},\quad h_n=\frac{n-1}{2n},
  \end{equation}
  where $h_n$ is the conformal weight of the fundamental representation of $SU(n)$ at level
  one.
\end{enumerate}
\end{proposition}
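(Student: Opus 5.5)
Everything will be derived from lemma~\ref{lem:hw}, applied pair by pair, combined with the norm identity $|q_i|^2=|p_i|^2=2B_i$. For (a), lemma~\ref{lem:hw} already shows that each $\Delta_{ij}$ is a non-zero vector of $\Lambda$ orthogonal to $W$, so $|\Delta_{ij}|^2\ge2$ and is even. The triangle-inequality step \eqref{eq:triangle}, run in the other direction, gives $|\Delta_{ij}|^2=|s_i-s_j|^2\le(\sqrt{R_i}+\sqrt{R_j})^2<4$ by \eqref{eq:sundomain}. Hence $|\Delta_{ij}|^2=2$, and $\Delta_{ij}$ is a root. For (b) I will expand $|q_j|^2=|q_i-\Delta_{ij}|^2=2B_i-2q_i\cdot\Delta_{ij}+2$, which gives $q_i\cdot\Delta_{ij}=1+B_i-B_j$. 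Equation \eqref{eq:budgetdiff} converts this into oscillator numbers. Writing $q_i=V_g+\lambda$ with $\lambda\in\Lambda$ and using integrality of $\Lambda$ (H1), I get $\Delta_{ij}\cdot V_g\equiv q_i\cdot\Delta_{ij}\pmod 1$, which is the stated congruence.

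For (c), $q_i\cdot q_j=|q_i|^2-q_i\cdot\Delta_{ij}=2B_i-1-B_i+B_j$, which is $B_i+B_j-1$ after symmetrising; the expression is already symmetric once written as $\tfrac12(|q_i|^2+|q_j|^2-|\Delta_{ij}|^2)$. Setting $G_{ij}=q_i\cdot q_j$, I will compute $\alpha_m\cdot\alpha_{m'}$ from the Gram entries. The key observation is that $G_{ij}=B_i+B_j-1$ for $i\ne j$ and $G_{ii}=2B_i=B_i+B_i-1+1$, so $G=b\mathbf 1^T+\mathbf 1b^T-J+I$ with $b=(B_i)$ and $J$ the all-ones matrix. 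Differences $q_i-q_j$ annihilate the rank-one parts, so $(q_i-q_j)\cdot(q_k-q_l)=\delta_{ik}-\delta_{il}-\delta_{jk}+\delta_{jl}$. This is exactly the inner product of $e_i-e_j$ with $e_k-e_l$ in $\RR^n$, so the $\Delta_{ij}$ realise the $A_{n-1}$ root system isometrically, and the $\alpha_m$ have its Cartan matrix. The same identity gives $\mu_i\cdot\alpha_m=(q_i-c)\cdot(q_m-q_{m+1})=\delta_{im}-\delta_{i,m+1}$, because the $c$ term contributes $\frac1n\sum_k(\delta_{km}-\delta_{k,m+1})=0$ and the rank-one parts again cancel. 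The $\mu_i$ lie in the span of the $\Delta$'s, since $\mu_i=\frac1n\sum_k\Delta_{ik}$. They therefore correspond to $e_i-\frac1n\mathbf 1$, the weights of the fundamental representation, with $|\mu_i|^2=(n-1)/n$.

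For (d), under the integrality hypothesis (b) gives $\Delta_{ij}\cdot V_g\in\ZZ$, so every $\Delta_{ij}$ lies in $H(V_g)$. These roots are orthogonal to $W$, hence to all $SU(3)_c$, $SU(2)_L$ roots and to $t_Y$. Orthogonal roots $\alpha,\beta$ satisfy $\alpha\pm\beta\notin$ roots in a simply-laced system (the squared length would be $4$), so the generated $SU(n)$ commutes with the Standard Model. The statement about weights follows from (c). For (e), $B_i=B$ for all $i$, and $P_Wq_i=w$ is common to all states by lemma~\ref{lem:hw}, so $c=w+c_\perp$ with $c_\perp=\frac1n\sum_k s_k^+$. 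Since $q_i=c+\mu_i$, I will show $c\perp\Delta_{jk}$: from $q_i\cdot\Delta_{jk}=\delta_{ij}-\delta_{ik}$ at equal budgets, averaging over $i$ gives $0$. Then $2B=|q_i|^2=|w|^2+|c_\perp|^2+|\mu_i|^2$, which is \eqref{eq:sunsplit}.

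I expect the main obstacle to be the bookkeeping in (c): checking that the Gram-matrix identity holds with unequal budgets, and making sure the sign and orientation conventions for $\Delta_{ij}$, which lemma~\ref{lem:hw} fixes only up to sign, are consistent across all pairs at once. Using $q_i$ directly rather than the unordered $\Delta$ should resolve this.
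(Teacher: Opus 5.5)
Your proposal is correct and follows the paper's proof part by part. Both use the same Cauchy--Schwarz-plus-evenness step for (a), the same expansion of $|q_i-\Delta_{ij}|^2$ for (b), integrality of $\Delta_{ij}\cdot V_g$ together with the length-four argument for (d), and the orthogonal split of $s_i$ into $c_\perp$ plus $\mu_i$ for (e). The only difference is in (c): your decomposition $G=b\mathbf 1^T+\mathbf 1b^T-J+I$ repackages the paper's four-point polarisation identity as a single isometry from the sum-zero hyperplane of $\RR^n$, and this gives the independence of the $\alpha_m$, the weights $\mu_i$ and the closure of $\{\Delta_{ij}\}$ used in (d) all at once, where the paper checks each of these separately.
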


\begin{proof}
(a) By lemma~\ref{lem:hw}, applied to each pair, the $q_i$ share one $W$-projection $w$ and lie in $V_g+\Lambda$, and $\Delta_{ij}=s_i-s_j$ is a non-zero element of $\Lambda$ orthogonal to $W$, where $s_i=(p_i)_\perp=(q_i)_\perp$ and $|s_i|^2=R_i$. Now $|\Delta_{ij}|^2=R_i+R_j-2\,s_i\cdot s_j$, and Cauchy--Schwarz bounds $-s_i\cdot s_j$ above by $\sqrt{R_iR_j}$, so $|\Delta_{ij}|^2\le(\sqrt{R_i}+\sqrt{R_j})^2<4$ by \eqref{eq:sundomain}. By H1 the squared length is a positive even integer, hence $2$.

(b) $|q_i|^2=2B_i$ by lemma~\ref{lem:hw} and \eqref{eq:budget}. Expanding $|q_j|^2=|q_i-\Delta_{ij}|^2$ with $|\Delta_{ij}|^2=2$ gives $q_i\cdot\Delta_{ij}=1+B_i-B_j$, and $B_i-B_j=N_{{\rm osc},j}-N_{{\rm osc},i}$ by \eqref{eq:budgetdiff}. Write $q_i=V_g+\lambda_i$ with $\lambda_i\in\Lambda$. Then $\Delta_{ij}\cdot V_g=q_i\cdot\Delta_{ij}-\lambda_i\cdot\Delta_{ij}$, and the second term is an integer because $\Lambda$ is integral.

(c) $|q_i-q_j|^2=2$ and $|q_i|^2=2B_i$ give $q_i\cdot q_j=B_i+B_j-1$. For any four indices, $(q_a-q_b)\cdot(q_c-q_d)=\tfrac12\bigl(|q_a-q_d|^2+|q_b-q_c|^2-|q_a-q_c|^2-|q_b-q_d|^2\bigr)$, and every distance between distinct indices is $\sqrt2$. Hence $\alpha_m\cdot\alpha_m=2$, $\alpha_m\cdot\alpha_{m+1}=-1$ and $\alpha_m\cdot\alpha_k=0$ for $|m-k|\ge2$, which is the Cartan matrix of $A_{n-1}$. It is positive definite, so the $\alpha_m$ are independent, and $\Delta_{ij}=\alpha_i+\dots+\alpha_{j-1}$ for $i<j$: together with their negatives these are the roots of $A_{n-1}$ in the base $\{\alpha_m\}$. From $q_i\cdot q_j=B_i+B_j-1$ one finds $q_i\cdot\alpha_m=B_m-B_{m+1}+\delta_{im}-\delta_{i,m+1}$, and summing over $i$, $c\cdot\alpha_m=B_m-B_{m+1}$; hence $\mu_i\cdot\alpha_m=\delta_{im}-\delta_{i,m+1}$. Since $\mu_i=\tfrac1n\sum_k\Delta_{ik}$ lies in the span of the roots, these pairings determine $\mu_i$. They are the pairings of the weights $\varepsilon_i-\tfrac1n\sum_k\varepsilon_k$ of the fundamental representation with the simple roots $\varepsilon_m-\varepsilon_{m+1}$, the Dynkin labels of $\mu_1$ are $(1,0,\dots,0)$, and $|\mu_i|^2=\tfrac1{n^2}\sum_{k,l}\Delta_{ik}\cdot\Delta_{il}=\bigl(2(n-1)+(n-1)(n-2)\bigr)/n^2=(n-1)/n$, using $\Delta_{ik}\cdot\Delta_{il}=1$ for distinct $k,l\ne i$.

(d) By (b), $\Delta_{ij}\cdot V_g\in\ZZ$, so every $\Delta_{ij}$ lies in $H(V_g)$. The set $\{\Delta_{ij}\}$ is closed in the root system of $\Lambda$: the sum of two of its elements is either again one of them, $\Delta_{ij}+\Delta_{jk}=\Delta_{ik}$, or is zero, or has squared length $4$ or $6$ and is no root. It is therefore the root system of a regular subalgebra $su(n)$ of the local algebra. For a root $\beta$ of $SU(3)_c$ or $SU(2)_L$ one has $\beta\in W$ and $\Delta_{ij}\perp W$, so $\beta\cdot\Delta_{ij}=0$ and $|\beta\pm\Delta_{ij}|^2=4$, and the root vectors commute; every element of $W$, $t_Y$ included, annihilates $\Delta_{ij}$. The states carry the weights of (c).

(e) For equal budgets (c) gives $c\cdot\alpha_m=0$, so $c$ is orthogonal to the span $U$ of the roots. Every $q_i$ has $W$-projection $w$ and $\mu_i\in U\perp W$, so $P_Wc=w$ and $c=w+c_\perp$ with $c_\perp\perp W\oplus U$. Then $s_i=c_\perp+\mu_i$ is an orthogonal sum, and $R=|s_i|^2=|c_\perp|^2+(n-1)/n$.
\end{proof}

When two states differ in oscillator number by a fraction, their difference is still a root of $\Lambda$ orthogonal to $W$, but by (b) it pairs with the local shift to that fraction modulo one and is no root of the local group. When $\sqrt{R_i}+\sqrt{R_j}\ge2$ the chain gives only $|\Delta_{ij}|^2\le(\sqrt{R_i}+\sqrt{R_j})^2$, which admits $|\Delta_{ij}|^2=4$, and no root need appear.

\begin{corollary}[multiplicity ceiling]
\label{cor:ceiling}
Let $p_1,\dots,p_n$, $n\ge2$, be the momenta of oscillator-free massless left-chiral states of one Standard-Model representation, all built on one constructing element, with H1--H4 in force, H5 holding for every pair, and common value $R=2B-|w|^2<1$. Then
\begin{equation}
\label{eq:ceiling}
n\ \le\ \frac{1}{1-R},
\end{equation}
with equality if and only if $c_\perp=0$.
\end{corollary}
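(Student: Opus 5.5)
The corollary follows directly from Proposition~\ref{prop:sun}(e), so the plan is to check that its hypotheses hold and then solve the resulting identity for $n$.

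\emph{Hypotheses.} The states are oscillator-free, so \eqref{eq:budgetdiff} gives equal budgets, hence a common $R$. From $R<1$ we get $\sqrt{R_i}+\sqrt{R_j}=2\sqrt R<2$, which is condition \eqref{eq:sundomain}. The oscillator differences are all zero, in particular integers. So parts (a)--(e) of Proposition~\ref{prop:sun} all apply.

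\emph{The identity.} Part (e) gives
\begin{equation*}
R=\frac{n-1}{n}+|c_\perp|^2\ \ge\ 1-\frac1n .
\end{equation*}
Rearranging, $\tfrac1n\ge 1-R$. Since $R<1$, the right-hand side is positive, so we may invert and obtain $n\le 1/(1-R)$, which is \eqref{eq:ceiling}.

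\emph{Equality.} The only inequality used is $|c_\perp|^2\ge0$. Hence $n=1/(1-R)$ holds exactly when $R=(n-1)/n$, that is, exactly when $c_\perp=0$.

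\emph{Main obstacle.} The substantive content sits entirely in Proposition~\ref{prop:sun}(e), which we take as given. Its key step is that $c_\perp$ is orthogonal to the span of the roots, so that $s_i=c_\perp+\mu_i$ is an orthogonal decomposition; this uses equal budgets through $c\cdot\alpha_m=B_m-B_{m+1}=0$. The only point needing care here is that $1-R>0$, so that inverting preserves the direction of the inequality. When $R<1/2$, Theorem~\ref{thm:bound} already excludes $n\ge2$, and the bound $n\le1/(1-R)<2$ is consistent with this.
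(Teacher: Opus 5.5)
Your proposal is correct and follows the paper's own proof: equal budgets from \eqref{eq:budgetdiff} reduce \eqref{eq:sundomain} to $R<1$, and the identity \eqref{eq:sunsplit} of proposition~\ref{prop:sun}(e) gives $1-\tfrac1n=R-|c_\perp|^2\le R$, with equality exactly when $c_\perp=0$. You also check explicitly that the oscillator differences are integers, so that part (e) applies, and that $1-R>0$ before inverting. The paper leaves both points implicit, and your writeup is correct to include them.
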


\begin{proof}
The budgets are equal by \eqref{eq:budgetdiff}, so \eqref{eq:sundomain} reduces to $R<1$, and \eqref{eq:sunsplit} gives $1-\tfrac1n=R-|c_\perp|^2\le R$, with equality exactly when $c_\perp=0$; that is, $n\le1/(1-R)$.
\end{proof}

In terms of the Gram matrix of the $s_i$, which part (a) makes $\mathbf{1}+(R-1)J$ with $J$ the all-ones matrix, the ceiling is positivity: the eigenvalue $1+n(R-1)$ on $(1,\dots,1)$ equals $n|c_\perp|^2$, since $\sum_is_i=nc_\perp$. The $n$ states spend $h_n$ of their budget on the fundamental weight of the local $SU(n)$ and the rest, $|c_\perp|^2/2$, on a charge they share.

For two states the weights are $\mu=\pm\Delta/2$ with $|\mu|^2=\tfrac12$, the conformal weight $h_2=\tfrac14$ is the constant of \eqref{eq:threshold}, and $|c_\perp|^2=R-\tfrac12$. The equality case of \eqref{eq:Rhalf} is $c_\perp=0$: two states saturating the bound are a doublet of an $SU(2)$ of the local group and carry no other charge. Away from saturation the difference of two oscillator-free states of one representation at one element, with $R<1$, is still a root of $H(V_g)$; saturation adds only the absence of the common charge.

The ceiling has content only for $R<1$: for $R\ge1$ the bound \eqref{eq:Rhalf} holds trivially and \eqref{eq:ceiling} is vacuous. In the two catalogues of section~\ref{sec:verif} every realised budget has $R<1$, the largest value being $R=83/90$, on a \fbar{} representation. Below threshold the ceiling reproduces the threshold, since $1/(1-R)<2$ exactly when $R<\tfrac12$, so a representation below its threshold cannot repeat at all. The ceiling does not need a common budget. For $n$ states of one representation at one constructing element, with H1--H5 for every pair, lemma~\ref{lem:hw} gives $|s_i-s_j|^2\ge2$ for $i\ne j$, where $s_i=(p_i)_\perp$, and the identity $\sum_{i<j}|s_i-s_j|^2=n\sum_iR_i-|\sum_is_i|^2$ then yields
\begin{equation}
\label{eq:meanceiling}
\bar R\ \ge\ \frac{n-1}{n}+|\bar s|^2,\qquad \bar R=\frac1n\sum_iR_i,\quad \bar s=\frac1n\sum_is_i ,
\end{equation}
so that $n\le1/(1-\bar R)\le1/(1-\max_iR_i)$ whenever $\bar R<1$, with or without oscillators. Under \eqref{eq:sundomain} every $|s_i-s_j|^2$ equals $2$ and \eqref{eq:meanceiling} is an equality, $\bar s$ being $c_\perp$; for equal budgets it is \eqref{eq:sunsplit}. For two states \eqref{eq:mainbound} is the stronger condition on the budgets, since $(\sqrt R+\sqrt{R'})^2\le2(R+R')$.

The ceiling is not the only restriction. The roots $\alpha_m$ of proposition~\ref{prop:sun}(c) form a connected $A_{n-1}$ diagram, and a root of $\Lambda=\Lambda_{E_8}\oplus\Lambda_{E_8}$ lies in one summand, so all of them lie in one $E_8$ factor, adjacent ones not being orthogonal. The largest subsystem of type $A$ in $E_8$ is $A_8$; hence $n\le9$ under the hypotheses of the proposition, whatever $R$. If the Standard Model sits in an $SU(5)$ of one factor whose Cartan subalgebra is $W$, the roots of that factor orthogonal to $W$ are those of the commuting $SU(5)$, and a family whose roots lie there has $n\le5$.

A root of the local group need not be a root of the unbroken four-dimensional group, which requires in addition an integral pairing with every other combination of shifts and Wilson lines. The four-dimensional condition is strictly stronger at the great majority of the twisted constructing elements we surveyed; section~\ref{sec:signature} gives the count, together with the exception where the two coincide.

%%%%%%%%%%%%%%%%%%%%%%%%%%%%%%%%%%%%%%%%%%%%%%%%%%%%%%%%%%%%%%%%%%%%%%%%%%%%%%%
\section{Multiplicity saturation in two orbifold catalogues}
\label{sec:verif}

How closely do actual spectra approach the ceiling? We use the complete massless spectra of $16{,}138$ models drawn from two published catalogues of Standard-Model-like models \cite{Parr:2020oar}, one on $T^6/(\ZZ_2\times\ZZ_4)$ and one on $T^6/(\ZZ_3\times\ZZ_6)$, on the space groups of the \textsc{orbifolder} geometry files \texttt{Geometry\_Z2xZ4\_1\_1} (lattice $\ZZ^6$, Wilson lines $W_1=W_2$, $W_3$, $W_4$, $W_5=W_6$ of order two) and \texttt{Geometry\_Z3xZ6\_1\_1} (section~\ref{sec:construction:model}), two of the symmetric toroidal orbifolds of the classification of ref.~\cite{Fischer:2012qj}; we recompute every spectrum with the \textsc{orbifolder}. We count models; multiplets; \emph{populated fixed points}, meaning constructing elements of a model at which massless states sit; Standard-Model representations at a populated fixed point, whose multiplicity is the number of multiplets carrying them there; and same-representation pairs. A populated fixed point can carry several representations, so representations at fixed points outnumber populated fixed points. The counting conventions are the same on both geometries and are collected in appendix~\ref{app:defs}.

Table~\ref{tab:verif} collects the catalogue-wide counts and table~\ref{tab:hist} the multiplicities behind them. A multiplet of \emph{hidden dimension} $d$, the dimension of its representation under the gauge factors other than those of the Standard Model, contains $d$ states of equal $|p|^2$, and where the distinction matters the last column of table~\ref{tab:verif} counts states rather than multiplets.\footnote{In the row of populated fixed points at which \eqref{eq:threshold} allows a \ten{} repeat it counts the \ten{} states that meet the threshold, and in the pair row unordered pairs of distinct states, $35{,}491$ of which lie across distinct multiplets.} The hypercharge generator of each model is orthogonal to the $SU(3)_c$ and $SU(2)_L$ roots, and the momentum of each multiplet has the length that \eqref{eq:budget} assigns to its sector and oscillator number.

\begin{table}[t]
\centering
\footnotesize
\setlength{\tabcolsep}{2.5pt}
\begin{tabular}{lrrrr}
\toprule
quantity & $\ZZ_2\times\ZZ_4$ & $\ZZ_3\times\ZZ_6$ & both & states\\
\midrule
models & $11{,}526$ & $4{,}612$ & $16{,}138$ & ---\\
populated twisted fixed points & $584{,}855$ & $295{,}734$ & $880{,}589$ & ---\\
twisted constructing elements & $58$ & $70$ & $128$ & ---\\
multiplets in a Standard-Model representation & $328{,}784$ & $173{,}612$ & $502{,}396$ & $518{,}373$\\
\quad \ten & $72{,}494$ & $37{,}223$ & $109{,}717$ & $109{,}717$\\
\quad \fbar & $256{,}290$ & $136{,}389$ & $392{,}679$ & $408{,}656$\\
\midrule
\ten{} representations at a fixed point & $72{,}494$ & $37{,}223$ & $109{,}717$ & ---\\
\quad multiplicity & all at $1$ & all at $1$ & all at $1$ & all at $1$\\
\fbar{} representations at a fixed point & $231{,}323$ & $128{,}747$ & $360{,}070$ & ---\\
\quad all members below threshold, $R<1/2$ & $3{,}913$ & $653$ & $4{,}566$ & all at $1$\\
populated fixed points with a \ten{} representation & $29{,}242$ & $16{,}208$ & $45{,}450$ & ---\\
\midrule
populated fixed points where \eqref{eq:threshold} allows a \ten{} repeat & $0$ & $65$ & $65$ & $65$\\
same-representation pairs of distinct multiplets & $25{,}883$ & $7{,}860$ & $33{,}743$ & $51{,}879$\\
\midrule
groups of $\ge2$ multiplets on one gauge momentum & $18{,}844$ & $4{,}775$ & $23{,}619$ & ---\\
\quad populated fixed points hosting one & $14{,}092$ & $4{,}406$ & $18{,}498$ & ---\\
\quad carrying a Standard-Model representation & $0$ & $0$ & $0$ & $0$\\
\bottomrule
\end{tabular}
\caption{Catalogue-wide counts. Counting conventions and the definitions of the rows are given in appendix~\ref{app:defs}; \ten{} fixed points are counted in either chirality.}
\label{tab:verif}
\end{table}

\begin{table}[t]
\centering
\footnotesize
\begin{tabular}{crrrrrr}
\toprule
 & \multicolumn{2}{c}{\ten, multiplets} & \multicolumn{2}{c}{\fbar, multiplets} & \multicolumn{2}{c}{\fbar, states}\\
\cmidrule(lr){2-3}\cmidrule(lr){4-5}\cmidrule(lr){6-7}
multiplicity & $\ZZ_2\times\ZZ_4$ & $\ZZ_3\times\ZZ_6$ & $\ZZ_2\times\ZZ_4$ & $\ZZ_3\times\ZZ_6$ & $\ZZ_2\times\ZZ_4$ & $\ZZ_3\times\ZZ_6$\\
\midrule
$1$ & $72{,}494$ & $37{,}223$ & $207{,}270$ & $121{,}319$ & $197{,}193$ & $117{,}494$\\
$2$ & --- & --- & $23{,}141$ & $7{,}218$ & $31{,}811$ & $10{,}458$\\
$3$ & --- & --- & $910$ & $206$ & $2{,}238$ & $788$\\
$4$ & --- & --- & $2$ & $4$ & $81$ & $6$\\
$5$ & --- & --- & --- & --- & --- & $1$\\
\bottomrule
\end{tabular}
\caption{Multiplicities of the Standard-Model representations at populated fixed points.}
\label{tab:hist}
\end{table}

Since \eqref{eq:mainbound} is a theorem, every pair obeys it, counted in multiplets or in states; what the spectra add is that no \ten{} representation repeats anywhere in either catalogue, again in multiplets or in states. On $\ZZ_2\times\ZZ_4$ this is forced, since every twisted sector has $B\le13/16<17/20$; on $\ZZ_3\times\ZZ_6$ the threshold \eqref{eq:threshold} is met at $65$ populated fixed points, to which we return in section~\ref{sec:construction:classes}.

The same-representation pairs of the catalogues lie inside the domain \eqref{eq:sundomain} of proposition~\ref{prop:sun}: over the $33{,}743$ pairs of distinct multiplets the largest value of $\sqrt R+\sqrt{R'}$ is $1.8166$, so each pair differs by a root of $\Lambda$ orthogonal to $W$. That root lies in $H(V_g)$ for the $33{,}537$ pairs of equal oscillator number, while for the $206$ pairs whose oscillator numbers differ by $\tfrac14$ (on $\ZZ_2\times\ZZ_4$) or $\tfrac16$ (on $\ZZ_3\times\ZZ_6$) it pairs with the local shift to that fraction modulo one. The $1{,}122$ groups of multiplicity three and four have the Cartan matrix of $A_{n-1}$, with every root in $H(V_g)$.

\begin{table}[t]
\centering
\small
\begin{tabular}{lccccc}
\toprule
geometry & $R$ & $B$ & $|w|^2$ & ceiling $\lfloor1/(1-R)\rfloor$ & largest multiplicity found\\
\midrule
$\ZZ_3\times\ZZ_6$ & $83/90$ & $31/36$ & $4/5$ & $12$ ($9$) & $2$\\
$\ZZ_2\times\ZZ_4$ & $33/40$ & $13/16$ & $4/5$ & $5$ & $4$\\
$\ZZ_3\times\ZZ_6$ & $34/45$ & $7/9$   & $4/5$ & $4$ & $4$\\
both               & $7/10$  & $3/4$   & $4/5$ & $3$ & $3$\\
$\ZZ_3\times\ZZ_6$ & $53/90$ & $25/36$ & $4/5$ & $2$ & $2$\\
$\ZZ_2\times\ZZ_4$ & $23/40$ & $11/16$ & $4/5$ & $2$ & $2$\\
$\ZZ_3\times\ZZ_6$ & $8/15$  & $2/3$   & $4/5$ & $2$ & $1$\\
$\ZZ_3\times\ZZ_6$ & $47/90$ & $31/36$ & $6/5$ & $2$ & $1$\\
$\ZZ_2\times\ZZ_4$ & $17/40$ & $13/16$ & $6/5$ & $1$ & $1$\\
$\ZZ_3\times\ZZ_6$ & $16/45$ & $7/9$   & $6/5$ & $1$ & $1$\\
both               & $3/10$  & $3/4$   & $6/5$ & $1$ & $1$\\
$\ZZ_2\times\ZZ_4$ & $7/40$  & $11/16$ & $6/5$ & $1$ & $1$\\
$\ZZ_3\times\ZZ_6$ & $17/90$ & $25/36$ & $6/5$ & $1$ & $1$\\
$\ZZ_2\times\ZZ_4$ & $13/40$ & $9/16$  & $4/5$ & $1$ & $1$\\
$\ZZ_2\times\ZZ_4$ & $3/40$  & $7/16$  & $4/5$ & $1$ & $1$\\
$\ZZ_3\times\ZZ_6$ & $11/30$ & $7/12$  & $4/5$ & $1$ & $1$\\
$\ZZ_3\times\ZZ_6$ & $23/90$ & $19/36$ & $4/5$ & $1$ & $1$\\
$\ZZ_3\times\ZZ_6$ & $4/45$  & $4/9$   & $4/5$ & $1$ & $1$\\
\bottomrule
\end{tabular}
\caption{The ceiling \eqref{eq:ceiling} and the largest same-representation multiplicity found at each realised value of $R$; in brackets, with the rank restriction of section~\ref{sec:ceiling}.}
\label{tab:ceiling}
\end{table}

Table~\ref{tab:ceiling} sets the ceiling against the largest multiplicity found at each realised value of $R$. A group of multiplets of one representation at one fixed point is listed under the largest $R$ among its members. Groups of mixed budget occur in two rows only. They obey \eqref{eq:meanceiling}, at mean budgets $\bar R=23/40$ and $34/45$, which allow two and four states, and \eqref{eq:mainbound}, with $(\sqrt{33/40}+\sqrt{13/40})^2=2.186$ and $(\sqrt{83/90}+\sqrt{53/90})^2=2.985$. On $\ZZ_2\times\ZZ_4$, $197$ \fbar{} multiplets at $R=13/40$ sit in groups whose largest $R$ is $33/40$; they do not change that row's largest multiplicity, which is four. On $\ZZ_3\times\ZZ_6$, the nine pairs of section~\ref{sec:signature} each have one member at $R=83/90$ and one at $53/90$, and they are the only groups of two in the $R=83/90$ row; no single-budget group there repeats. Every other row has a single common budget, and counted in states rather than multiplets every row again lies within its ceiling.

The ceiling is reached. At four values of $R$, on both geometries, the largest multiplicity equals $\lfloor1/(1-R)\rfloor$: two at $R=23/40$ and $53/90$, three at $R=7/10$ and four at $R=34/45$, in multiplets and in states alike. At none of these values is $1/(1-R)$ an integer, so by proposition~\ref{prop:sun}(e) every saturating family carries a non-zero common charge $c_\perp$. Elsewhere the ceiling leaves room: at $R=33/40$ it allows five and four occur, and at $R=83/90$ the norm allows twelve and the rank restriction nine, while no single-budget group repeats.

Two states that are both below threshold cannot form a pair, with or without oscillators, since then $\sqrt R+\sqrt{R'}<\sqrt2$. A state below threshold can still pair with a state of higher budget at the same element, as the $197$ \fbar{} multiplets at $R=13/40$ above do; what excludes a partner is the largest budget available there. On $\ZZ_2\times\ZZ_4$ the sector $T(1,3)$ has oscillator-free budget $11/16$, and the smallest excitation, $\Nosc=1/4$, lowers it to $B=7/16$, so a \fbar{} representation there has $R=2B-\tfrac45=\tfrac3{40}$. Corollary~\ref{cor:ceiling} does not apply to such states, but theorem~\ref{thm:bound} does: every state at this element has $B\le\tfrac{11}{16}$ and hence $R\le\tfrac{23}{40}$, and $\sqrt{3/40}+\sqrt{23/40}=1.032<\sqrt2$, so a state at $R=3/40$ has no partner of its representation there and $n\le1$. Every group of \fbar{} representations whose members are all below threshold has a single member (table~\ref{tab:verif}), while the same representations reach multiplicity four at higher budget: the multiplicity follows $2B-|w|^2$ and not the representation.

\subsection{Coincident momenta and hidden multiplets}
\label{sec:verif:h5}

Two or more multiplets on one gauge momentum at one fixed point, distinguished only by their oscillator excitations, do occur (table~\ref{tab:verif}), and by section~\ref{sec:bound} their oscillator numbers are equal. None of them carries a Standard-Model representation in either catalogue, so the case excluded by H5 does not arise there. For the \ten{} representations on $\ZZ_2\times\ZZ_4$ this is forced: they need $B\ge3/5$, while the largest sector budget less the smallest oscillator mode is $13/16-1/4=9/16$, so no \ten{} state there carries an oscillator. On $\ZZ_3\times\ZZ_6$, where eleven oscillator-carrying \ten{} multiplets occur, it is a property of the spectra alone.

A multiplet of hidden dimension $d$ contains $d$ gauge momenta, and a pairwise test over multiplets does not compare them with one another. For the \ten{} representations this leaves nothing out: every \ten{} multiplet in either catalogue has hidden dimension one, so no \ten{} representation is doubled inside a multiplet, and the same holds for every multiplet below either threshold. Hidden dimension counts multiplicity of the Standard-Model representation only if the hidden factors commute with the Standard Model, so that their roots are orthogonal to $W$. This follows from the identification of the Standard Model, since $t_Y$ is orthogonal to every simple root of the four-dimensional group and distinct simple factors are orthogonal; in all $16{,}138$ models every root of the four-dimensional gauge group lies in $W$ or is orthogonal to it.

%%%%%%%%%%%%%%%%%%%%%%%%%%%%%%%%%%%%%%%%%%%%%%%%%%%%%%%%%%%%%%%%%%%%%%%%%%%%%%%
\section{An explicit model at the bound}
\label{sec:construction}

\subsection{The model}
\label{sec:construction:model}

On $T^6/(\ZZ_3\times\ZZ_6)$ with twists $v_1=(0,\tfrac13,-\tfrac13)$ and $v_2=(\tfrac16,0,-\tfrac16)$, the sectors $T(0,1)$ and $T(0,5)$ have $\delta c=5/36$ and hence the oscillator-free budget $B=31/36$, above the \ten{} threshold $17/20$. We use the space group of the \textsc{orbifolder} geometry file \texttt{Geometry\_Z3xZ6\_1\_1}, whose lattice is the root lattice of $SU(3)^3$, the torus $T^2_a$ being spanned by $e_{2a-1}=(2,0)$ and $e_{2a}=(-1,\sqrt3)$. It allows a non-trivial Wilson line only along the torus that $\omega$ leaves fixed, of order three, and identifies the two lines along the basis vectors of that torus, which $\theta$ maps into each other, so that $W_4=W_3$; the Wilson lines $W_1,W_2,W_5,W_6$ along the other two tori are trivial, and we set them to zero. We built a modular-invariant model there, in the standard basis and without discrete torsion, from an order-three shift $V_1$, an order-six shift $V_2$ and the Wilson line $W_3=W_4$:
\begin{align}
\label{eq:model}
V_1&=\bigl(\tfrac12,\tfrac32,\tfrac12,\tfrac16,\tfrac76,\tfrac12,\tfrac12,\tfrac12\,\bigl|\,0^8\bigr),\nonumber\\
V_2&=\bigl(\tfrac23,\tfrac23,-\tfrac13,-\tfrac13,-\tfrac13,0,0,0\,\bigl|\,\tfrac12,\tfrac12,0,0,0,0,0,0\bigr),\\
W_3=W_4&=\bigl(\tfrac23,-\tfrac43,\tfrac23,0,0,0,0,0\,\bigl|\,\tfrac13,-\tfrac13,\tfrac23,0,0,0,0,0\bigr),\nonumber\\
t_Y&=\bigl(-\tfrac13,-\tfrac13,-\tfrac13,\tfrac12,\tfrac12,0,0,0\,\bigl|\,0^8\bigr).\nonumber
\end{align}
These are the vectors used, not only their classes modulo $\Lambda$: shifting $V_1$ or $W_3$ by a lattice vector can change the spectrum (section~\ref{sec:construction:classes}). Its four-dimensional gauge group contains $SU(3)_c\times SU(2)_L\times U(1)_Y$ as whole factors, together with $SU(4)$, $E_6\times SU(2)$ and two further $U(1)$ factors, and the hypercharge is at $k_Y=5/3$. Every abelian anomaly trace vanishes, so the model has no anomalous $U(1)$; in particular $\Tr Y=\Tr Y^3=0$, the traces running over the left-chiral massless multiplets weighted by dimension. The net number of quark doublets, with the colour representation of each state read from its gauge momentum, is $+3$. With this $t_Y$, $W$ is the traceless part of the first five coordinates, and the weights of the \ten{} family, written on the first five coordinates with zeros in the remaining eleven as in appendix~\ref{app:classes:param}, are
\begin{equation}
\label{eq:wij}
w_{ij}=(e_i+e_j)-\tfrac25(1,1,1,1,1),\qquad 1\le i<j\le5,
\end{equation}
together with their negatives for the conjugate representations.

\begin{table}[t]
\centering
\small
\begin{tabular}{ll}
\toprule
sector and fixed point & $T(0,1)$, $(0^6)$\\
gauge momenta & $p_{1,2}=\bigl(-\tfrac13,-\tfrac13,\tfrac23,-\tfrac13,\tfrac23,0,0,0\,\bigl|\,\mp\tfrac12,\mp\tfrac12,0^6\bigr)$\\
$\Delta$, $|\Delta|^2$ & $\bigl(0^8\,\bigl|\,1,1,0^6\bigr)$, $2$\\
$p\cdot t_Y$ & $+\tfrac16$, for both\\
$SU(3)_c$, $SU(2)_L$ weights & colour $e_3$, weak $e_5$: the \ten{} weight $w_{35}$, whence $B_v=-2$\\
representation, as listed & $(\rep1,\arep3,\rep2,\rep1,\rep2)$\\
oscillator & none\\
hidden dimension & $2$\\
Standard-Model representation & $Q=(\rep3,\rep2)_{+1/6}$\\
$|p|^2$, $B$ & $31/18$, $31/36$\\
$|w|^2$, $R$ & $6/5$, $47/90$\\
$|p\cdot\Delta|$ & $1$\\
\bottomrule
\end{tabular}
\caption{The doubled fixed point of the model \eqref{eq:model}. The weights $w_{ij}$ are defined in \eqref{eq:wij} and $B_v$ in appendix~\ref{app:classes:param}.}
\label{tab:doubled}
\end{table}

Table~\ref{tab:doubled} shows the doubled fixed point. The two momenta have the same $W$-projection, so no Weyl reflection is needed to compare them and $\Delta$ is simply their difference, and the colour weight $e_3$ together with $p\cdot t_Y=+\tfrac16$ identifies the state as the $Q$ of the \ten{} family.\footnote{The \textsc{orbifolder} orders the simple roots of $SU(3)_c$ as $(e_2-e_3,\,e_1-e_2)$, the reverse of the ordering in \eqref{eq:model}, and therefore lists this triplet as $\arep3$. Reversing the order of the Dynkin labels of $SU(3)$ is conjugation, and the same convention holds for all $53$ colour-triplet multiplets of the model.}

The doubling sits just above the bound. $B=31/36$ exceeds the threshold $17/20$ by $1/90$, so $R$ exceeds its threshold by $1/45$, and the bound is cleared narrowly, $2\sqrt{47/90}=1.4453$ against $\sqrt2=1.4142$. The ceiling $1/(1-47/90)=90/43=2.093$ gives $n\le2$, and exactly two states are realised. By proposition~\ref{prop:sun} they form a doublet of an $SU(2)$ in the local gauge group, and the margin $1/45$ in $R$ is the squared length $|c_\perp|^2$ of the charge they share.

At this fixed point the hypotheses of theorem~\ref{thm:bound} take a simple form. The sector has a single constructing element, which carries no translation (section~\ref{sec:signature}), so $V_g=V_2$ exactly, with no Wilson-line contribution and no dependence on the fixed point. The Standard-Model factors are factors of the four-dimensional gauge group, and every $SU(3)_c$ and $SU(2)_L$ root lies in the first $E_8$ factor, while the two momenta, which share their representation and hence $|w|^2$, differ by a root of $\Lambda$, $|\Delta|^2=2$, lying in the second (table~\ref{tab:doubled}); $\Wsm$ therefore acts trivially on the block in which they differ.

We computed the spectrum of the model with the \textsc{orbifolder} \cite{Nilles:2011aj}, and again with the \textsc{orbifolder} from an independently written input deck; the two runs agree on all $147$ multiplets, on the gauge group, its abelian generators and simple roots, and on the anomaly traces. An independently written code for the orbifold projection, applied to the data \eqref{eq:model} at all $140$ pairs of a twisted constructing element and a massless right-mover of either chirality, reproduces the $88$ oscillator-free twisted multiplets of the spectrum at the $44$ pairs that carry them, with their CPT conjugates in the opposite chirality, and finds no state at any other pair; it does not compute states with oscillators. From the same data it finds the same four-dimensional gauge group. The model shows that the doubling exists; we do not regard it as realistic or as a candidate vacuum.

\subsection{The two shift classes}
\label{sec:construction:classes}

On this geometry and at this budget, the order-six shifts that allow a doubled \ten{} representation can be listed completely. We say that an order-six shift class $V_2+\Lambda$ \emph{admits a doubling} if there exist vectors $p\in V_2+\Lambda$ and $\Delta\in\Lambda$ with
\begin{equation}
\label{eq:admits}
|p|^2=\tfrac{31}{18},\qquad P_Wp\in\{\pm w_{ij}\},\qquad |\Delta|^2=2,\qquad \Delta\perp W,\qquad p\cdot\Delta=\pm1 ;
\end{equation}
the partner $p'=p-\mathrm{sgn}(p\cdot\Delta)\,\Delta$ then has $|p'|^2=|p|^2$ and $P_Wp'=P_Wp$. The condition involves only the lattice. It is necessary for a doubled \ten{} representation at one constructing element; whether the two states survive the orbifold projection depends on the rest of the model.

Two order-six shift vectors are \emph{equivalent} when they differ by a lattice translation followed by an automorphism of $\Lambda$ that preserves each $E_8$ factor, carries the Standard-Model root system $\{\pm(e_i-e_j)\}_{1\le i<j\le3}\cup\{\pm(e_4-e_5)\}$ to itself and carries $t_Y$ to $\pm t_Y$. The condition is setwise rather than pointwise, because the reflections in the Standard-Model simple roots permute the weights inside a representation, and the global $-1$, which is the relabelling $\omega\mapsto\omega^{-1}$ together with charge conjugation, sends $t_Y$ to $-t_Y$.

\begin{theorem}[the admitting classes]
\label{thm:twoclasses}
Assume:
\begin{enumerate}
\item[\textbf{S1}] the geometry is $T^6/(\ZZ_3\times\ZZ_6)$ in the realisation whose twists are $v_1=(0,\tfrac13,-\tfrac13)$ and $v_2=(\tfrac16,0,-\tfrac16)$, and the doubling sits in $T(0,1)$ or $T(0,5)$, where $B=31/36$;
\item[\textbf{S2}] both states have $\Nosc=0$, hence $|p|^2=31/18$;
\item[\textbf{S3}] the Standard Model sits inside the $SU(5)$ of the first $E_8$ factor, with $SU(3)_c=\langle e_1-e_2,e_2-e_3\rangle$, $SU(2)_L=\langle e_4-e_5\rangle$ and $t_Y$ as in \eqref{eq:model}, and $SU(3)_c\times SU(2)_L\times U(1)_Y$ survives in four dimensions;
\item[\textbf{S4}] two distinct massless left-chiral states of one \ten{} representation are built on one and the same constructing element and are distinct modulo $\Wsm$.
\end{enumerate}
Then exactly two inequivalent order-six shift classes admit a doubling in the sense of \eqref{eq:admits}, up to lattice translations and the automorphisms of $\Lambda$ stabilising the Standard-Model embedding of S3. Representatives are
\begin{align}
\label{eq:reps}
V_2^{\rm A}&=\bigl(\tfrac23,\tfrac23,-\tfrac13,-\tfrac13,-\tfrac13,0,0,0\,\bigl|\,\tfrac12,\tfrac12,0^6\bigr),\nonumber\\
V_2^{\rm B}&=\bigl(\tfrac23,\tfrac23,-\tfrac13,-\tfrac13,-\tfrac13,-\tfrac12,-\tfrac12,0\,\bigl|\,0^8\bigr).
\end{align}
\end{theorem}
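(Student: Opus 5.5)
The proof will be a finite classification. Everything is fixed by the lattice data \eqref{eq:admits}, so I first reduce the condition to a finite search and then quotient by the stated equivalence.

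\emph{Reduction to a normal form.} Since $6V_2\in\Lambda$, the coset $V_2+\Lambda$ is determined by $V_2$ modulo $\Lambda$, and $V_2$ is a vector of $\tfrac16\Lambda$. By S3, $W$ is the traceless part of the first five coordinates of the first $E_8$. So $p=w_{ij}+s$ with $s\perp W$ and $|s|^2=R=31/18-6/5=47/90$. By proposition~\ref{prop:sun}(c),(e) with $n=2$, $s=c_\perp\pm\Delta/2$ with $|c_\perp|^2=1/45$, $c_\perp\perp\Delta$, and $\Delta$ a root of $\Lambda$ orthogonal to $W$. The roots of $E_8\oplus E_8$ orthogonal to the $SU(5)$ Cartan are the roots of the commuting $SU(5)$ in the first factor together with all $240$ roots of the second. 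Using the Weyl groups of the commuting $SU(5)$ and of the second $E_8$ (these stabilise the embedding) together with the global $-1$, I may take $\Delta$ to be one of two representatives. These are a root of the commuting $SU(5)$, say $\Delta=e_6+e_7$ up to sign conventions, or a root of the second $E_8$, $\Delta=(0^8|1,1,0^6)$.

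\emph{Enumeration.} With $\Delta$ fixed, $p$ ranges over vectors with $P_Wp=\pm w_{ij}$ (which by the Standard-Model Weyl group and $-1$ I normalise to $w_{35}$), $p\cdot\Delta=\pm1$, and $|p|^2=31/18$. The residual component $c_\perp$ has squared length $1/45$ and lies in $\tfrac16\Lambda$ projected orthogonally to $W\oplus\RR\Delta$. This is a finite set, and I enumerate it directly. Each solution $p$ determines a candidate class $p+\Lambda$. I then impose that the class is of order six, i.e.\ $6p\in\Lambda$ and $kp\notin\Lambda$ for $k<6$. I also impose the modular-invariance constraint for the order-six shift, namely $6(|V_2|^2-|v_2|^2)\in2\ZZ$ computed modulo the appropriate lattice shift; without it the list would contain non-shifts. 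Finally I impose S3, which requires $\alpha\cdot V_2\in\ZZ$ for the Standard-Model roots; this forces the first three coordinates to be congruent and the fourth and fifth congruent.

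\emph{Quotient.} For each surviving candidate, I reduce modulo $\Lambda$ and modulo the stabiliser of the Standard-Model embedding. This stabiliser is generated by the $S_3\times S_2$ permutations of the first five coordinates, the Weyl group of the commuting $SU(5)$, the full Weyl group of the second $E_8$, and $-1$. I would use a canonical form, for instance the lexicographically smallest representative in the fundamental domain of the affine Weyl group of the commuting factors. I expect the candidates to collapse to exactly two orbits, distinguished by whether $\Delta$ lies in the second $E_8$ (class A, realised by the model \eqref{eq:model}) or in the first (class B). To certify that A and B are inequivalent, I would exhibit an invariant, namely in which $E_8$ factor the admissible $\Delta$ can lie. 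This is preserved because the automorphisms preserve each factor and the Standard-Model data sits in the first.

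The main obstacle is completeness of the enumeration modulo $\Lambda$. The $E_8$ lattice contains half-integer spinorial vectors, so $c_\perp$ and the hidden components must be scanned over both integer and half-integer cosets, in $\tfrac16$ steps, with a length bound. The bound $|p|^2=31/18$ makes this finite. In practice I would carry it out by a computer enumeration of $\tfrac16\Lambda$ vectors of bounded norm. A cross-check would be to confirm that each of $V_2^{\rm A}$ and $V_2^{\rm B}$ actually admits the configuration \eqref{eq:admits} by explicit $p,\Delta$. For class A, table~\ref{tab:doubled} already gives these.
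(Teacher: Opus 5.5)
Your plan has the same shape as the paper's proof: a finite enumeration of pairs $(p,\Delta)$, the resulting cosets merged under generators of the stabiliser, and an invariant tied to the $E_8$ factor that carries $\Delta$. Two steps, however, do not go through as you justify them.

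The first is the normalisation $P_Wp=w_{35}$. The group $S_3\times S_2$ together with $-1$ has three orbits on the ten weights $w_{ij}$:
\begin{itemize}
\item $\{w_{12},w_{13},w_{23}\}$, which is $u^c$ with $Y=-\tfrac23$;
\item the six mixed weights, which are $Q$ with $Y=\tfrac16$;
\item $w_{45}$, which is $e^c$ with $Y=1$.
\end{itemize}
No element of the stabiliser joins these orbits, because it sends $t_Y$ to $\pm t_Y$ and so preserves $|Y|$. A search at $w_{35}$ alone is therefore not shown to be complete.

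It is complete, but for another reason. Take any of the six roots $e_i-e_j$, $i\le3<j\le5$, of the Standard-Model $SU(5)$. Then $p\cdot(e_i-e_j)=P_Wp\cdot(e_i-e_j)\in\{0,\pm1\}$, so the reflection in this root maps $p$ into its own coset $p+\Lambda$. It also preserves $|p|$ and fixes $p_\perp$, and hence fixes $p\cdot\Delta$. Together with $S_3\times S_2$ these reflections act on the $w_{ij}$ as $S_5$. So every admitting coset already contains a witness at $w_{35}$ without being moved at all. In the conjugate branch the witness sits at $-w_{35}$, which $-1$ then handles. This is why, in the paper, which member of the \ten{} family is doubled depends on the representative and not on the class.

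The second gap is the invariant. That $\Delta$ lies in a given factor is preserved by blockwise automorphisms. You have not shown, however, that it is a function of the coset. A priori one class could admit one doubling with $\Delta$ in the first $E_8$ and another with $\Delta$ in the second, and then "the" factor is undefined.

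The missing step is the Cauchy--Schwarz pinning of the second-factor block $p_B$:
\begin{itemize}
\item If $\Delta$ lies in the second factor, $|p_B\cdot\Delta|=1$ gives $|p_B|^2\ge\tfrac12$, and the budget gives $|p_B|^2\le\tfrac{47}{90}$. Then $|6p_B|^2\in[18,18.8]\cap2\ZZ$ forces $|p_B|^2=\tfrac12$ and $p_B=\pm\tfrac12\Delta$.
\item If $\Delta$ lies in the first factor, the part of the first block orthogonal to $W$ already carries at least $\tfrac12$. Hence $|6p_B|^2\le\tfrac45<2$ and $p_B=0$.
\end{itemize}
Since $\Lambda$ is an orthogonal sum, $p_B+\Lambda_{E_8}$ depends only on $V_2+\Lambda$. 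It equals $\tfrac12\Delta+\Lambda_{E_8}\ne\Lambda_{E_8}$ in one branch and $\Lambda_{E_8}$ in the other. Its minimal norm is the paper's $\mu$, $\tfrac12$ against $0$. This makes your invariant well defined and proves inequivalence under the whole stabiliser by hand. The same pinning also cuts your scan of $\tfrac16\Lambda$ down to eight values of $B_v$ and a three-square problem.

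Two smaller points remain.
\begin{itemize}
\item For the upper bound on the number of orbits, a fundamental domain of the affine Weyl group of the commuting factors is not a canonical form modulo all of $\Lambda$ and the Standard-Model Weyl group. It suffices, as in the paper, to close the finite set of cosets under a generating set and count connected components.
\item The modular-invariance and S3 filters you add are not part of \eqref{eq:admits}. They are harmless only because they are vacuous: $6(|V_2|^2-|v_2|^2)=6(\tfrac{31}{18}-\tfrac1{18})=10$ for every solution, and $\alpha\cdot V_2\equiv\alpha\cdot P_Wp\equiv0\pmod1$ for every Standard-Model root $\alpha$.
\end{itemize}
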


The invariant $\mu$ of the proof needs only that no such automorphism exchanges the two $E_8$ factors, which S3 forces; every element of $W(E_8)\times W(E_8)$ then acts blockwise and isometrically on the second factor. The count is therefore insensitive to how narrowly the group is drawn: the enumeration of appendix~\ref{app:classes} merges classes under a set of generators of the group, so the true orbits number at most two, and $\mu$ forbids the merger of the two, so they number at least two.

\begin{proof}
At this budget the doubling is rigid. The two states differ by a vector of the gauge lattice which the budget forces to be a root, by proposition~\ref{prop:sun}(a) at $n=2$, since $4R=94/45<4$; the root is orthogonal to the Standard-Model Cartan directions, so by part (d) the pair is a doublet of an $SU(2)$ of the local gauge group commuting with the Standard Model. Each sector named in S1 is constructed by one translation-free element, so the local shift is $V_2$ at $T(0,1)$ and $5V_2\equiv-V_2$ modulo $\Lambda$ at $T(0,5)$, with no Wilson-line contribution; the two cosets differ, because $V_2$ has order six and so $2V_2\notin\Lambda$. The conditions \eqref{eq:admits} are unchanged under $p\mapsto-p$, $\Delta\mapsto-\Delta$, since they admit each weight together with its negative, so a class admits a doubling at $T(0,5)$ exactly when it does at $T(0,1)$, and in either sector the class of the momentum fixes the class of $V_2$. Writing the momentum in the natural parametrisation splits the residual length into three squares whose integrality admits exactly eight values of the free parameter, and Cauchy--Schwarz applied to the pairing of the momentum with the root collapses those eight according to which $E_8$ factor carries the root: if it is the factor without the Standard Model, the second-factor part of the momentum is pinned to half a root and only two of the eight values survive; if it is the factor with the Standard Model, that part vanishes outright. What remains is a three-square problem in each branch. Its enumeration (appendix~\ref{app:classes}) gives $12{,}600$ solutions in $630$ classes of $\Lambda$, which fall into exactly two orbits; the conjugate weights double both counts without adding an orbit. The two are inequivalent because the invariant $\mu(V_2)=\min\{|y|^2:y\in p_B+\Lambda_{E_8}\}$, where $p_B$ is the second-factor component, takes the value $\tfrac12$ on the first class and $0$ on the second, and is preserved by every automorphism stabilising the Standard-Model embedding of S3, such automorphisms acting blockwise.
\end{proof}

In physical terms, the doubling root lies in the $E_8$ factor without the Standard Model in the first class and in the factor with it in the second, and no automorphism of the lattice that fixes the Standard Model exchanges the two factors. Other geometries do not raise the question: every other twisted sector of the seven geometries we examined, the two of section~\ref{sec:verif} and the five further symmetric orbifolds $\ZZ_4$, $\ZZ_6$-I, $\ZZ_6$-II, $\ZZ_3\times\ZZ_3$ and $\ZZ_4\times\ZZ_4$ of ref.~\cite{Fischer:2012qj}, each on the space group of its \textsc{orbifolder} geometry file with suffix \texttt{\_1\_1}, has $B\le13/16$, below the \ten{} threshold.

Both classes are realised. With $V_2$ fixed to one of \eqref{eq:reps}, an embedding is a pair of vectors $(V_1,W_3)$, $W_4=W_3$, with $3V_1,3W_3\in\Lambda$, without discrete torsion and with $t_Y$ as in S3, satisfying the complete set of modular-invariance conditions imposed by the \textsc{orbifolder}. We require every $SU(3)_c$ and $SU(2)_L$ root to pair integrally with $V_1$ and with $W_3$, each of the twelve roots $\pm(e_i-e_j)$, $i\le3<j\le5$, of the first-factor $SU(5)$ outside $SU(3)_c\times SU(2)_L$ to pair non-integrally with $V_1$ or with $W_3$, and $\Delta\cdot V_1$ and $\Delta\cdot W_3$ to be integers, which section~\ref{sec:signature} shows to be necessary for both members of the pair to survive. These conditions, and the four-dimensional gauge group, depend only on the classes $V_1+\Lambda$ and $W_3+\Lambda$; the spectrum does not. Replacing $V_1$ by $V_1+\lambda$ or $W_3$ by $W_3+\mu$, with $\lambda,\mu\in\Lambda$, changes the projection phases of twisted states by amounts that on this geometry depend only on $\lambda\cdot V_2$ and $\mu\cdot V_2$ modulo one, which modular invariance restricts to $0$, $\tfrac13$ and $\tfrac23$; this is the equivalence between lattice shifts and discrete torsion of ref.~\cite{Ploger:2007iq}. Each pair of classes therefore carries nine choices of these phases, and so up to nine spectra; for the model of section~\ref{sec:construction:model} the nine spectra are distinct, and each of the lattice shifts of $V_1$, $V_2$ and $W_3$ we tried there reproduces one of them. We identify pairs of classes related by the reflections in the roots of $\Lambda$ orthogonal to $W$, $\Delta$ and $V_2$, of which there are $138$ for the first class ($12$ in the first $E_8$ and $126$ in the second) and $242$ for the second ($2$ and $240$). This leaves $2{,}912$ pairs over the first class and $1{,}432$ over the second, inequivalent under these reflections, though possibly not under every automorphism of the data. In $4{,}032$ of them $SU(3)_c$ and $SU(2)_L$ are whole factors of the four-dimensional gauge group and $t_Y$ lies in the span of its abelian generators, orthogonal to all its simple roots; we say that these have the Standard Model intact. Over each of them we computed the spectra of nine embeddings, one for each choice of $\lambda\cdot V_2$ and $\mu\cdot V_2$ modulo one, $36{,}288$ in all. Of these, $21{,}168$ carry a doubled \ten{} representation at a twisted fixed point. None has three net generations in $Q$, $u^c$ and $e^c$ at once, the net number of a representation being the number of its states less the number of states in the conjugate representation, with the colour representation of each state read from its gauge momentum; the best reach two of the three, in $216$. Within S1--S4 and the conditions of section~\ref{sec:signature}, the search is exhaustive.

The two classes also account for the $R=47/90$ row of table~\ref{tab:ceiling}, where the largest multiplicity in the catalogues is one. Every model of the $\ZZ_3\times\ZZ_6$ catalogue but one\footnote{Its Standard-Model roots lie in both $E_8$ factors, and it carries no \ten{} multiplet at this budget.} embeds the Standard Model in an $SU(5)$ whose Cartan subalgebra is $W$, so theorem~\ref{thm:twoclasses} applies to it after an automorphism of $\Lambda$, and in none of them, evaluated in its own frame, does \eqref{eq:admits} hold at $T(0,1)$ or $T(0,5)$. Accordingly none of the $65$ \ten{} multiplets at $R=47/90$ has a root $\Delta\perp W$ with $|p\cdot\Delta|=1$, and none has a partner even before the orbifold projection. Multiplicity two at this budget is realised by the model of section~\ref{sec:construction}, which is built over the first admitting class, while the catalogue contains no model over either class. That no embedding over those classes with the Standard Model intact has three generations is consistent with the row, but the row is already explained by the shift classes rather than by the orbifold projection.

%%%%%%%%%%%%%%%%%%%%%%%%%%%%%%%%%%%%%%%%%%%%%%%%%%%%%%%%%%%%%%%%%%%%%%%%%%%%%%%
\section{Doubled representations at translation-free constructing elements}
\label{sec:signature}

On $T^6/(\ZZ_3\times\ZZ_6)$ with the lattice this geometry fixes, the sectors $T(0,1)$ and $T(0,5)$ each have a single constructing space-group element, and it carries no lattice translation: the twist rotates two of the three tori by $\pm60^\circ$, so each of them has exactly one fixed point, and the states of the sector are localised in those two tori and free to move in the third. The orbifold projection runs over all space-group elements that commute with the constructing element, and a Wilson line along a fixed torus acts as an additional projector \cite{Forste:2004ie}. Here the commuting elements include the whole point group and every lattice translation in the torus the twist leaves fixed, and on this geometry the only Wilson line the lattice permits to be non-trivial lies in that torus. The projection at this element therefore tests the difference of two surviving gauge momenta against $V_1$, $V_2$ \emph{and every Wilson line}, rather than against the local shift alone.

Comparing the projection phases of two states at such an element requires more than a shared constructing element: the states must also share their right-moving state, both momentum and excitation, and their left-moving oscillator excitation. At these two elements all three requirements follow. The right-moving masslessness condition $\tfrac12|q+v_g|^2+N_R=\tfrac12-\delta c(g)$, with $q$ a weight of $SO(8)$ in its vector or spinor class, $v_g$ the twist of the point-group part of $g$ and $N_R\ge0$ the right-moving oscillator number \cite{Nilles:2011aj}, involves neither the gauge momentum nor the left-moving oscillators. At $T(0,1)$, where $v_g=(0,\tfrac16,0,-\tfrac16)$, the first entry referring to the non-compact directions, and $\tfrac12-\delta c=\tfrac{13}{36}$, its only bosonic solutions are $q+v_g=(0,-\tfrac56,0,-\tfrac16)$ and $(0,\tfrac16,0,\tfrac56)$, one for each four-dimensional chirality and each with one fermionic partner; at $T(0,5)$ they are the negatives of these. Every left-chiral state at the element therefore carries the same right-moving momentum, and so do the $30{,}922$ multiplets at these two sectors in one embedding over each of the $2{,}912$ pairs of classes over the first shift class of section~\ref{sec:construction:classes} and in the $\ZZ_3\times\ZZ_6$ catalogue. The right-moving excitation is shared as well: in the vector and spinor classes alike $\tfrac12|q+v_g|^2\ge\tfrac{13}{36}$ at both elements, so no massless state there carries one. For two states distinct modulo $\Wsm$ the left-moving oscillator excitation follows from theorem~\ref{thm:bound}: a \ten{} state here has $R=47/90$ without an oscillator and $R=17/90$ with the smallest one, any larger excitation leaving no massless \ten{} state, and both $\sqrt{47/90}+\sqrt{17/90}$ and $2\sqrt{17/90}$ fall below $\sqrt2$, so both states of a doubled \ten{} representation are oscillator-free. In both populations the multiplets whose momenta have the length of the doubled states of table~\ref{tab:doubled}, $2{,}239$ and $12{,}350$ of them, carry no oscillator, while the nine same-representation pairs at these translation-free elements all differ in their oscillator excitations, and for none of them does the conclusion below hold.

If the two states share their right-moving state and their left-moving oscillator excitation, every term of the projection phase except the one linear in the gauge momentum cancels between them, and their difference pairs integrally with every gauge shift of the model. It is then a root of the unbroken four-dimensional gauge group, orthogonal to the Standard-Model Cartan subalgebra, and the two states are the two weights of an unbroken $SU(2)$ commuting with the Standard Model. In this geometry and these two sectors, a doubled \ten{} representation whose two states are distinct modulo $\Wsm$ therefore appears in the spectrum as a single multiplet of hidden dimension two, a doublet of that $SU(2)$, and never as two multiplets. The two populations show the undoubled and the doubled case. At these two elements the \ten{} multiplets of the $\ZZ_3\times\ZZ_6$ catalogue, $76$ in all, have hidden dimension one, and those of the embeddings over the first shift class, one over each of its $2{,}912$ pairs of classes, $2{,}239$ in $1{,}812$ of them including that of section~\ref{sec:construction}, have hidden dimension two; in neither population does a \ten{} representation appear there twice.

The step does not generalise. At $527$ of the $533$ twisted constructing elements of the seven geometries of section~\ref{sec:construction:classes}, the elements commuting with the constructing element impose only part of the gauge-shift and Wilson-line conditions. On $T^6/(\ZZ_2\times\ZZ_4)$, whose Wilson lines occupy all three tori, this happens at every element, and the conditions that drop out are those of the Wilson lines of the two rotated tori. Where a constructing element carries a lattice translation, even the point-group shifts drop out, and no statement of this form can be made, including at the translated elements of $T(0,2)$, $T(0,3)$ and $T(0,4)$ of this same geometry. Across both catalogues, none of the roots of $\Lambda$ by which the $33{,}743$ same-representation pairs listed as two multiplets differ is a root of the four-dimensional group; for the pairs of equal oscillator number it is a root of the local group (section~\ref{sec:verif}). On another geometry it is the argument that carries over, not its conclusion, and the argument needs a translation-free constructing element and every non-trivial Wilson line inside the torus the sector's twist leaves fixed.

%%%%%%%%%%%%%%%%%%%%%%%%%%%%%%%%%%%%%%%%%%%%%%%%%%%%%%%%%%%%%%%%%%%%%%%%%%%%%%%
\section{Discussion}
\label{sec:discussion}

The bound of theorem~\ref{thm:bound} rests on the evenness of the gauge lattice and on the hypotheses H1--H5, and the projected norms of section~\ref{sec:norms} make it a condition on the sector, with thresholds $17/20$ for the \ten{} and $13/20$ for the \fbar{} representations. For several oscillator-free states with $R<1$ it says more: they form the fundamental representation of a local $SU(n)$ commuting with the Standard Model, and the multiplicity ceiling is the level-one weight of that representation. The \ten{} threshold is crossed by the explicit model of section~\ref{sec:construction}, and for the embedding S3 exactly two shift classes on its geometry allow the doubling. Theorem~\ref{thm:bound}, proposition~\ref{prop:sun} and corollary~\ref{cor:ceiling} hold on any toroidal orbifold of the kind set up in section~\ref{sec:setup} whenever their hypotheses are met; theorem~\ref{thm:twoclasses} is specific to the geometry and embedding of S1 and S3, and our statements about spectra concern only the two geometries whose catalogues we computed.

For model building, the thresholds test a sector from its twist alone. No twisted sector of $\ZZ_2\times\ZZ_4$ reaches $17/20$, so no \ten{} representation repeats at a fixed point there, and the same budget arithmetic places every twisted sector of the five further geometries of section~\ref{sec:construction:classes} below that threshold, a prediction that only a computation of their spectra can test. Where a repetition is allowed, such states carry, besides the weights of the local $SU(n)$, a non-zero common charge $c_\perp$ unless $n=1/(1-R)$ exactly. In the catalogues the integer part of the ceiling is reached at four budgets. For two of the models, that of section~\ref{sec:construction} and a second embedding over the same shift class, an independent code for the orbifold projection also reproduces the oscillator-free twisted spectrum at every constructing element and right-mover, including the absence of states where the spectrum has none. The embeddings over the two admitting shift classes add an observation of more direct phenomenological interest: none of the $36{,}288$ with the Standard Model intact, and so none of the $21{,}168$ that realise the doubling, has three net generations of $Q$, $u^c$ and $e^c$ at once. This is a computational finding, and we have no derivation of it.

The case excluded by H5, two multiplets on one highest-weight momentum, is settled by the spectra rather than by the lattice, except for the \ten{} representations on $\ZZ_2\times\ZZ_4$ (section~\ref{sec:verif:h5}), and on another geometry with a small oscillator mode the spectra must be checked before the bound is used to exclude a repetition. Whether the local root carried by a repeated oscillator-free representation at $R<1$ lifts to a root of the four-dimensional group is open in general; section~\ref{sec:signature} identifies two sectors in which it does, and in the catalogues it never does for a pair listed as two multiplets. Theorem~\ref{thm:twoclasses} assumes the embedding S3, and an embedding in which $W$ sits differently inside $\Lambda$ requires the parametrisation and the root count to be redone.

%%%%%%%%%%%%%%%%%%%%%%%%%%%%%%%%%%%%%%%%%%%%%%%%%%%%%%%%%%%%%%%%%%%%%%%%%%%%%%%
\acknowledgments

The spectra of every model discussed here were computed with the publicly available \textsc{orbifolder} \cite{Nilles:2011aj}, and this work would not have been possible without it.

Large language models (Anthropic's Claude) were used in this work to generate code and drafts of text. All computations and results, and the text of this paper, have been verified by the author, who takes full responsibility for the content.

\section*{Data availability}
\addcontentsline{toc}{section}{Data availability}

The results of sections~\ref{sec:norms}--\ref{sec:ceiling} are proved in the text. The two classes of theorem~\ref{thm:twoclasses} and their inequivalence can be checked by hand from \eqref{eq:reps} and appendix~\ref{app:classes}; the completeness of the classification rests on the enumeration specified there, whose $12{,}600$ solutions in $630$ classes, closure computation over $332{,}640$ generator applications and breadth-first certificate, with an independent checker, are in the reproduction package. The model of section~\ref{sec:construction} is fully specified by \eqref{eq:model}, $W_4=W_3$, and the geometry file \texttt{Geometry\_Z3xZ6\_1\_1} of the \textsc{orbifolder} 1.2.1 distribution, which the package includes, and its spectrum can be regenerated with the \textsc{orbifolder} \cite{Nilles:2011aj}. The package also contains the per-model spectra of both catalogues, the assignment of Standard-Model representations and of $|w|^2$ by projection, the per-fixed-point multiplicity tables and the pairwise test behind tables~\ref{tab:verif}, \ref{tab:hist} and \ref{tab:ceiling}, with the tests of \eqref{eq:meanceiling} and of the rank restriction of section~\ref{sec:ceiling}, the input deck and stored spectrum of the model with the second code for the orbifold projection and its output at every constructing element and right-mover, the shift-class enumeration with its output and log, the spectra of all $36{,}288$ embeddings of section~\ref{sec:construction:classes} with the scripts that build and analyse them, the survey of the $533$ twisted constructing elements behind section~\ref{sec:signature}, and scripts that regenerate every number in table~\ref{tab:verif} and section~\ref{sec:construction}. Data are released under CC BY 4.0 and code under the MIT licence, with redistributed \textsc{orbifolder} input files under the GPL v3 terms they carry, and the package is permanently archived at Zenodo, DOI \href{https://doi.org/10.5281/zenodo.23575403}{\texttt{10.5281/zenodo.23575403}}.

\appendix

%%%%%%%%%%%%%%%%%%%%%%%%%%%%%%%%%%%%%%%%%%%%%%%%%%%%%%%%%%%%%%%%%%%%%%%%%%%%%%%
\section{The two shift classes}
\label{app:classes}

\subsection{Parametrisation}
\label{app:classes:param}

Under S1--S4 the momentum has $|p|^2=31/18$ and $|w|^2=6/5$, so $R=47/90$ and $4R=94/45<4$. By proposition~\ref{prop:sun} at $n=2$ the difference $\Delta$ of the two highest-weight momenta is a root, $|p\cdot\Delta|=1$, and the residual freedom is a vector $s'\perp\Delta$, $s'\perp W$ of the fixed norm $R-\tfrac12=1/45$. Under S3 the $W$-projection of $p$ is the traceless part of its first five coordinates, one of the weights $w_{ij}$ of \eqref{eq:wij}. Writing the first five coordinates of $p$ as $m(1,1,1,1,1)+w_{ij}$ and setting $B_v:=6(m-\tfrac25)$, the residual length splits as
\begin{equation}
\label{eq:split3}
R\;=\;5m^2+\bigl(p_6^2+p_7^2+p_8^2\bigr)+|p_B|^2\;=\;\tfrac{47}{90},
\end{equation}
where $p_B$ is the second-factor block and all three terms are non-negative. Since $p$ and $V_2$ differ by an element of $\Lambda$ and $6\Lambda\subset\Lambda$, the order condition $6V_2\in\Lambda$ is exactly $6p\in\Lambda$. The first block of $6p$ has the entry $B_v+6$ at the positions $i$ and $j$, the entry $B_v$ at the other three of the first five, and $c_k:=6p_k$ at $k=6,7,8$; its membership in $\Lambda_{E_8}$ says that $B_v$ and the $c_k$ are integers together or half-odd-integers together and that $5B_v+12+c_6+c_7+c_8\in2\ZZ$.

\subsection{The eight residual budgets}

From \eqref{eq:split3}, $5m^2\le\tfrac{47}{90}$ bounds $B_v$ to $[-4.3391,-0.4609]$, and intersecting with $\tfrac12\ZZ$ leaves exactly eight values:
\begin{center}
\begin{tabular}{rccccc}
\toprule
$B_v$ & $-4$ & $-\tfrac72$ & $-3$ & $-\tfrac52$ \\
${\rm rem}=\tfrac{47}{90}-5m^2$ & $\tfrac16$ & $\tfrac{17}{48}$ & $\tfrac{17}{36}$ & $\tfrac{25}{48}$ \\
\midrule
$B_v$ & $-2$ & $-\tfrac32$ & $-1$ & $-\tfrac12$\\
${\rm rem}$ & $\tfrac12$ & $\tfrac{59}{144}$ & $\tfrac14$ & $\tfrac1{48}$\\
\bottomrule
\end{tabular}
\end{center}
The neighbouring values fail on their own: $B_v=-\tfrac92$ gives $5m^2=\tfrac{49}{80}>\tfrac{47}{90}$, and $B_v=0$ gives $5m^2=\tfrac45>\tfrac{47}{90}$.

\subsection{The two branches}

The lattice $\Lambda$ is an orthogonal sum of two even lattices, so $2=|\Delta|^2=|\Delta_1|^2+|\Delta_2|^2$ with both terms even: one is $2$ and the other $0$, and there is no mixed case. With $\Delta\perp W$ this leaves $20$ candidates in the factor containing the Standard Model and all $240$ roots of the other, $260$ in all. If $\Delta$ lies in the factor without the Standard Model, then $|p_B\cdot\Delta|=1$ and Cauchy--Schwarz gives $|p_B|^2\ge\tfrac12$, while \eqref{eq:split3} gives $|p_B|^2\le\tfrac{47}{90}$; since $|6p_B|^2$ is an even integer in $[18,18.8]$ it is $18$, Cauchy--Schwarz is saturated, $6p_B=\pm3\Delta$ and $|p_B|^2=\tfrac12$ exactly, so ${\rm rem}\ge\tfrac12$ and only $B_v=-\tfrac52$ and $B_v=-2$ survive. If instead $\Delta$ lies in the factor containing the Standard Model, the same inequality applied to the part of the first block orthogonal to $W$ gives $|p_B|^2\le\tfrac{47}{90}-\tfrac12=\tfrac1{45}$, so $|6p_B|^2\le\tfrac45<2$ and $p_B=0$ exactly. The branches are disjoint, and each leaves a three-square problem $c_6^2+c_7^2+c_8^2=T$ with $T$ read off the table. Enumerating both branches over all eight budgets, all ten weights of the \ten{} family and all $260$ candidate roots gives $12{,}600$ solutions in $630$ classes of $\Lambda$, and adjoining the conjugate weights doubles both counts.

\subsection{Representatives and hand checks}

Table~\ref{tab:reps} lists the two classes of theorem~\ref{thm:twoclasses} in one frame. The first five coordinates of the representatives \eqref{eq:reps} are identical, carrying the weight $w_{12}$ at $B_v=-2$, and the doubling pair $\pm(\tfrac12,\tfrac12)$ sits in coordinates $6,7$ of the Standard Model's own $E_8$ in class B and in coordinates $1,2$ of the other $E_8$ in class A; that is the entire difference between them. Both $V_2^{\rm A}$ and $V_2^{\rm B}$ carry the weight $w_{12}$, which is a $u^c$, at $V_2\cdot t_Y=-\tfrac23$; the doubled fixed point of the model \eqref{eq:model} lies in class A at the representative $p_1$ of table~\ref{tab:doubled}, which carries $w_{35}$ and is a $Q$ at $p\cdot t_Y=+\tfrac16$, and $p_1-V_2^{\rm A}\in\Lambda$; which member of the \ten{} family is doubled depends on the representative, not on the class.

\begin{table}[h]
\centering
\small
\begin{tabular}{lll}
\toprule
class & $V_2$ & $\Delta$ (block 1 $\mid$ block 2)\\
\midrule
A & $V_2^{\rm A}$ of \eqref{eq:reps} & $\bigl(0^8\mid1,1,0^6\bigr)$\\
B & $V_2^{\rm B}$ of \eqref{eq:reps} & $\bigl(0,0,0,0,0,-1,-1,0\mid0^8\bigr)$\\
\bottomrule
\end{tabular}
\caption{The two admitting classes. In each row the two states are $V_2$ and $V_2-\Delta$.}
\label{tab:reps}
\end{table}

For either row: $|V_2|^2=2\cdot\tfrac49+3\cdot\tfrac19+2\cdot\tfrac14=\tfrac{31}{18}$; $|P_WV_2|^2=\tfrac{11}9-\tfrac15\cdot\tfrac19=\tfrac65$, the traceless part of $(\tfrac23,\tfrac23,-\tfrac13,-\tfrac13,-\tfrac13)$ being $(\tfrac35,\tfrac35,-\tfrac25,-\tfrac25,-\tfrac25)=w_{12}$; $|\Delta|^2=2$ with $\Delta$ vanishing on the first five coordinates, hence $\Delta\perp W$; $V_2\cdot\Delta=+1$; $6V_2\in\Lambda$ --- block 1 of $6V_2^{\rm B}$ is $(4,4,-2,-2,-2,-3,-3,0)$, integral with even sum --- while $3V_2\notin\Lambda$, so the order is exactly six; and $|V_2-\Delta|^2=|V_2|^2$ with $P_W(V_2-\Delta)=P_WV_2$, so the partner is a second state of the same representation at the same fixed point. The search imposes only $6V_2\in\Lambda$, which allows orders $1,2,3,6$; every solution nevertheless has order exactly six. Modular invariance imposes nothing at this stage: the condition $N(|V_2|^2-|v_2|^2)\in2\ZZ$ evaluates to $6(\tfrac{31}{18}-\tfrac1{18})=10$ for every solution.

\subsection{The separating invariant}

For a class $V_2+\Lambda$ let $\mu(V_2)=\min\{|y|^2:y\in p_B+\Lambda_{E_8}\}$. It is well defined on classes and preserved by every automorphism stabilising the Standard-Model embedding of S3, since those act blockwise --- an automorphism exchanging the two $E_8$ factors would carry the Standard Model out of the first --- and blockwise action is by isometries. Now $\mu(V_2^{\rm B})=0$ because $p_B=0$, and $\mu(V_2^{\rm A})=\tfrac12$: any $y$ in that class has $2y\in\Lambda_{E_8}$ with $2y\equiv\Delta$ modulo $2\Lambda_{E_8}$, and since the minimal norm of $2\Lambda_{E_8}$ is $8$ while $|\Delta|^2=2$, the vector $2y$ is non-zero; every non-zero vector of $\Lambda_{E_8}$ has squared length at least $2$, so $|2y|^2\ge2$ and $|y|^2\ge\tfrac12$, with equality at $y=\Delta/2$. The two classes are therefore inequivalent.

\subsection{Completeness of the orbit count}

The stabiliser is generated by $264$ elements: the reflections in the three Standard-Model simple roots, in the $20$ first-factor roots orthogonal to $W$ and in the $240$ roots of the second factor, together with the global $-1$. The $1{,}260$ classes of the enumeration with its conjugate branch adjoined form a set closed under these generators: all $332{,}640$ images lie in it. The connected components of the graph the generators induce on this set are therefore orbits, and there are two, of sizes $600$ and $30$ in the \ten{} branch. A breadth-first search from the two representatives reaches all $1{,}260$ classes and gives for each a chain of generators carrying it to its representative. The invariant $\mu$ shows that the two orbits do not merge under any element of the stabiliser, not only under the generators used.

%%%%%%%%%%%%%%%%%%%%%%%%%%%%%%%%%%%%%%%%%%%%%%%%%%%%%%%%%%%%%%%%%%%%%%%%%%%%%%%
\section{Counting conventions}
\label{app:defs}

A \emph{model} is one entry of a published catalogue: a lattice, two shift vectors and six Wilson lines. We use one space group of each geometry. The published $\ZZ_2\times\ZZ_4$ set runs to $55{,}429$ models across eleven space groups and the $\ZZ_3\times\ZZ_6$ set to $5{,}153$ across two \cite{Parr:2020oar}, and the space groups we use hold $11{,}540$ and $4{,}612$ distinct models. Ten models of the second geometry were computed twice and are counted once.

A \emph{multiplet} is one irreducible representation of the four-dimensional gauge group, with its multiplicity, its sector, fixed point, gauge momentum, right-mover and oscillators. We count left-chiral multiplets only, never their CPT partners separately. A multiplet carries a Standard-Model representation if its colour, isospin and hypercharge say so, whatever it carries under the remaining gauge factors. An \emph{embedding} is the shift and Wilson-line data together with a choice of hypercharge generator. The massless spectrum does not depend on that choice, which fixes only which states carry which Standard-Model representation.

A constructing element $g=(\theta^k\omega^l,\sum_an_ae_a)$ is labelled by the full tuple $(k,l,n_1\dots n_6)$, the finer of the two available orbit notions, so distinct constructing elements are never merged. Across both catalogues only $128$ twisted constructing elements occur, $58$ on the first geometry and $70$ on the second, and the $880{,}589$ populated fixed points of table~\ref{tab:verif} count each of these $128$ elements once per model in which it carries massless states. Multiplicities and the ceiling refer to a Standard-Model representation at a populated fixed point: the $109{,}717$ \ten{} representations sit at $45{,}450$ populated fixed points, $2.414$ per fixed point, and the $360{,}070$ \fbar{} ones at $198{,}385$, $1.815$ per fixed point. A \emph{same-representation pair} is an unordered pair of distinct left-chiral multiplets sharing a populated fixed point and a Standard-Model representation, whatever their hidden representations.

For each model we group the left-chiral multiplets at twisted fixed points by constructing element, assign Standard-Model representations and $|w|^2$ by projection onto $W$, form multiplicities, and test every pair within a group against \eqref{eq:mainbound}. Each model is counted once, at the first listed embedding for which the Standard Model can be identified, with one convention on both geometries; where several placements of the Standard-Model factors among the model's gauge factors reproduce its listed table, we take the first. On the first geometry this succeeds for $11{,}526$ of the $11{,}540$ models, $10{,}927$ at the first embedding and $599$ at a later one; the remaining $14$ list a single embedding, at $k_Y=5/3$, and are not counted. On the second geometry it succeeds for all $4{,}612$, so every count of table~\ref{tab:verif} covers the same $16{,}138$ models. The \textsc{orbifolder} lists multiplicities only for the embedding it selects, so we computed the other identifications separately on both catalogues; none of them yields a violation of \eqref{eq:mainbound}, a repeated \ten{} representation, or a \ten{} multiplet of hidden dimension above one.


\begin{thebibliography}{99}

\bibitem{Gross:1984dd}
D.~J.~Gross, J.~A.~Harvey, E.~J.~Martinec and R.~Rohm,
\emph{The Heterotic String},
Phys.\ Rev.\ Lett.\ \textbf{54} (1985) 502.

\bibitem{Gross:1985fr}
D.~J.~Gross, J.~A.~Harvey, E.~J.~Martinec and R.~Rohm,
\emph{Heterotic String Theory. 1. The Free Heterotic String},
Nucl.\ Phys.\ B \textbf{256} (1985) 253.

\bibitem{Dixon:1985jw}
L.~J.~Dixon, J.~A.~Harvey, C.~Vafa and E.~Witten,
\emph{Strings on Orbifolds},
Nucl.\ Phys.\ B \textbf{261} (1985) 678.

\bibitem{Dixon:1986jc}
L.~J.~Dixon, J.~A.~Harvey, C.~Vafa and E.~Witten,
\emph{Strings on Orbifolds. 2.},
Nucl.\ Phys.\ B \textbf{274} (1986) 285.

\bibitem{Ibanez:1986tp}
L.~E.~Ib\'a\~nez, H.~P.~Nilles and F.~Quevedo,
\emph{Orbifolds and Wilson Lines},
Phys.\ Lett.\ B \textbf{187} (1987) 25.

\bibitem{Hamidi:1986vh}
S.~Hamidi and C.~Vafa,
\emph{Interactions on Orbifolds},
Nucl.\ Phys.\ B \textbf{279} (1987) 465.

\bibitem{Dixon:1986qv}
L.~J.~Dixon, D.~Friedan, E.~J.~Martinec and S.~H.~Shenker,
\emph{The Conformal Field Theory of Orbifolds},
Nucl.\ Phys.\ B \textbf{282} (1987) 13.

\bibitem{Forste:2004ie}
S.~F\"orste, H.~P.~Nilles, P.~K.~S.~Vaudrevange and A.~Wingerter,
\emph{Heterotic brane world},
Phys.\ Rev.\ D \textbf{70} (2004) 106008
[\href{https://arxiv.org/abs/hep-th/0406208}{hep-th/0406208}].

\bibitem{Kobayashi:2004ya}
T.~Kobayashi, S.~Raby and R.-J.~Zhang,
\emph{Searching for realistic 4d string models with a Pati-Salam symmetry: Orbifold grand unified theories from heterotic string compactification on a $Z_6$ orbifold},
Nucl.\ Phys.\ B \textbf{704} (2005) 3
[\href{https://arxiv.org/abs/hep-ph/0409098}{hep-ph/0409098}].

\bibitem{Buchmuller:2005jr}
W.~Buchm\"uller, K.~Hamaguchi, O.~Lebedev and M.~Ratz,
\emph{Supersymmetric standard model from the heterotic string},
Phys.\ Rev.\ Lett.\ \textbf{96} (2006) 121602
[\href{https://arxiv.org/abs/hep-ph/0511035}{hep-ph/0511035}].

\bibitem{Buchmuller:2006ik}
W.~Buchm\"uller, K.~Hamaguchi, O.~Lebedev and M.~Ratz,
\emph{Supersymmetric Standard Model from the Heterotic String (II)},
Nucl.\ Phys.\ B \textbf{785} (2007) 149
[\href{https://arxiv.org/abs/hep-th/0606187}{hep-th/0606187}].

\bibitem{Lebedev:2006kn}
O.~Lebedev, H.~P.~Nilles, S.~Raby, S.~Ramos-S\'anchez, M.~Ratz, P.~K.~S.~Vaudrevange and A.~Wingerter,
\emph{A Mini-landscape of exact MSSM spectra in heterotic orbifolds},
Phys.\ Lett.\ B \textbf{645} (2007) 88
[\href{https://arxiv.org/abs/hep-th/0611095}{hep-th/0611095}].

\bibitem{Kobayashi:2006wq}
T.~Kobayashi, H.~P.~Nilles, F.~Pl\"oger, S.~Raby and M.~Ratz,
\emph{Stringy origin of non-Abelian discrete flavor symmetries},
Nucl.\ Phys.\ B \textbf{768} (2007) 135
[\href{https://arxiv.org/abs/hep-ph/0611020}{hep-ph/0611020}].

\bibitem{Ploger:2007iq}
F.~Pl\"oger, S.~Ramos-S\'anchez, M.~Ratz and P.~K.~S.~Vaudrevange,
\emph{Mirage Torsion},
JHEP \textbf{04} (2007) 063
[\href{https://arxiv.org/abs/hep-th/0702176}{hep-th/0702176}].

\bibitem{Nilles:2008gq}
H.~P.~Nilles, S.~Ramos-S\'anchez, M.~Ratz and P.~K.~S.~Vaudrevange,
\emph{From strings to the MSSM},
Eur.\ Phys.\ J.\ C \textbf{59} (2009) 249
[\href{https://arxiv.org/abs/0806.3905}{arXiv:0806.3905}].

\bibitem{Nilles:2011aj}
H.~P.~Nilles, S.~Ramos-S\'anchez, P.~K.~S.~Vaudrevange and A.~Wingerter,
\emph{The Orbifolder: A Tool to study the Low Energy Effective Theory of Heterotic Orbifolds},
Comput.\ Phys.\ Commun.\ \textbf{183} (2012) 1363
[\href{https://arxiv.org/abs/1110.5229}{arXiv:1110.5229}].

\bibitem{Fischer:2012qj}
M.~Fischer, M.~Ratz, J.~Torrado and P.~K.~S.~Vaudrevange,
\emph{Classification of symmetric toroidal orbifolds},
JHEP \textbf{01} (2013) 084
[\href{https://arxiv.org/abs/1209.3906}{arXiv:1209.3906}].

\bibitem{CaboBizet:2013gns}
N.~G.~Cabo~Bizet, T.~Kobayashi, D.~K.~Mayorga~Pe\~na, S.~L.~Parameswaran, M.~Schmitz and I.~Zavala,
\emph{R-charge Conservation and More in Factorizable and Non-Factorizable Orbifolds},
JHEP \textbf{05} (2013) 076
[\href{https://arxiv.org/abs/1301.2322}{arXiv:1301.2322}].

\bibitem{Parr:2020oar}
E.~Parr, P.~K.~S.~Vaudrevange and M.~Wimmer,
\emph{Predicting the orbifold origin of the MSSM},
Fortsch.\ Phys.\ \textbf{68} (2020) 2000032
[\href{https://arxiv.org/abs/2003.01732}{arXiv:2003.01732}].

\end{thebibliography}
\end{document}